\title{On the m-eternal Domination Number of Cactus Graphs
\thanks{V. Bla\v zej and T. Valla acknowledge the support of the OP VVV MEYS funded project
CZ.02.1.01/0.0/0.0/16\_019/0000765 ``Research Center for Informatics''.}
}
\author{V{\'a}clav Bla{\v z}ej \and Jan Maty{\'a}{\v s} K{\v r}i{\v s}{\v t}an \and Tom{\'a}{\v s} Valla}
\institute{%
Faculty of Information Technology, Czech Technical University in Prague,\\ Prague, Czech Republic
}
\tikzstyle{block} = [rectangle, draw, text width=8em, text centered, rounded corners, minimum height=3em]
\tikzstyle{pic} = [rectangle, draw, text centered, rounded corners, minimum height=3em, text width=13em]
\tikzstyle{line} = [draw, -latex']
\def\spread[#1,#2]{#1_1, #1_2, \dots, #1_{#2} } 
\algnewcommand{\LineComment}[1]{\State \(\triangleright\) #1}
\DeclarePairedDelimiter{\ceil}{\lceil}{\rceil}
\DeclarePairedDelimiter{\floor}{\lfloor}{\rfloor}
\newcommand{\upperbound}{\noindent{\bf Upper bound: }}
\newcommand{\lowerbound}{\noindent{\bf Lower bound: }}
\newif\iffullpaper
\begin{document}

\maketitle


\def\spread[#1,#2]{#1_1, #1_2, \dots, #1_{#2} } 

\hyphenation{de-ve-loped}
\hyphenation{com-bi-na-to-ri-al}
\hyphenation{Fried-gut}

\newcommand{\BC}{BC} 
\newcommand{\CONF}{C}
\newcommand{\CONFS}{\mathbb{C}}
\newcommand{\FF}{\mathbb{F}}
\newcommand{\SG}{\mathcal{G}}
\newcommand{\SI}{\mathcal{I}}
\newcommand{\SJ}{\mathcal{J}}
\newcommand{\EDE}{\gamma^{\infty}_\mathrm{me}} 
\newcommand{\EDN}{\gamma^{\infty}_\mathrm{m}} 
\newcommand{\EGC}{\Gamma^{\infty}_\mathrm{m}} 
\newcommand{\DN}{\gamma} 

\newtheorem{observation}{Observation}
\newtheorem{reduction}{Reduction}

\newcommand\MED{m-eternal domination}

\newcommand{\toappendix}[1]{%
  \gappto{\appendixProofText}{
    {#1}
   }
}

\begin{abstract}
    Given a graph $G$, guards are placed on vertices of $G$.
    Then vertices are subject to an infinite sequence of attacks
    so that each attack must be defended by a guard moving from a neighboring vertex.
    The m-eternal domination number is the minimum number of guards such that
    the graph can be defended indefinitely.
    In this paper we study the m-eternal domination number of cactus graphs,
    that is, connected graphs where each edge lies in at most two cycles,
    and we consider three variants of the m-eternal domination number:
    first variant allows multiple guards to occupy a single vertex,
    second variant does not allow it, and in the third variant additional ``eviction'' attacks must be defended.
    We provide a new upper bound for the m-eternal domination number of cactus graphs,
    and for a subclass of cactus graphs called
    Christmas cactus graphs, where each vertex lies in at most two cycles,
    we prove that these three numbers are equal.
    Moreover, we present a linear-time algorithm for computing them.
\end{abstract}

\section{Introduction}

Let us have a graph $G$ whose vertices are occupied by guards.
The graph is subject to an infinite sequence of vertex attacks.
The guards may move to any neighboring vertex after each attack.
After moving, a vertex attack is defended if the vertex is occupied by a guard.
The task is to come up with a strategy such that the graph can be defended indefinitely.

Defending a graph from attacks using guards for an infinite number of steps was introduced by Burger et~al.~\cite{ineq1}.
In this paper we study the concept of the \MED, which was introduced by Goddard et~al.~\cite{eternal-security-in-graphs} (eternal domination was originally called eternal security).


The m-eternal guarding number $\EGC(G)$ is the minimum number of guards which tackle all attacks in $G$ indefinitely.
Here the (slightly confusing) notion of the letter ``m'' emphasizes that multiple guards may move during each round.
The m-eternal domination number $\EDN(G)$ is the minimum number of guards which tackle all attacks indefinitely,
with the restriction that no two guards may occupy a single vertex simultaneously.
We also introduce the m-eternal domination number with eviction $\EDE(G)$, which is similar to $\EDN$ with
the additional requirement, that during each round one can decide to either attack a vertex or choose an ``evicted'' vertex or edge, which must be cleared of guards in the next round.
There is also a variant of the problem studied by Goddard et~al.~\cite{eternal-security-in-graphs} where only one guard may move during each round, which
is not considered in our paper.
We will define all concepts formally at the end of this section.

Goddard et~al.~\cite{eternal-security-in-graphs} established $\EDN$ exactly for paths, cycles, complete graphs and complete bipartite graphs,
showing that $\EDN(P_n) = \ceil{n/2}$, $\EDN(C_n) = \ceil{n/3}$, $\EDN(K_n) = 1$ and $\EDN(K_{m,n}) = 2$.
The authors also provide several bounds for general graphs,
most notably $\DN(G) \leq \EDN(G) \leq \alpha(G)$,
where $\alpha(G)$ denotes the size of the maximum independent set in $G$
and $\DN(G)$ is the size of the smallest dominating set in $G$.
Since that several results focused on finding bounds of $\EDN$ in different conditions or graph classes.

Henning, Klostermeyer and MacGillivray~\cite{bounds-meternal-dom} explored the relationship
between $\EDN$ and the minimum degree $\delta(G)$ of a graph $G$:
If $G$ is a connected graph with minimum degree $\delta(G) \geq 2$ and has $n \neq 4$ vertices,
then $\EDN(G) \leq \floor{(n - 1)/2}$, and this bound is tight.

Finbow, Messinger and van Bommel~\cite{3n-grids} proved the following result for $3 \times n$ grids.
For $n \geq 2$,\\
$$\EDN(P_3 \square P_n) \leq \ceil{6n/7} + \begin{cases}
  1 & \text{if } n \equiv 7,8,14 \text{ or } 15 \bmod 21,\\
  0 & \text{otherwise.}
\end{cases}
$$
Here $G \square H$ denotes the Cartesian product of graphs $G$ and $H$.

Van Bommel and van Bommel~\cite{5n-grids} showed for $5 \times n$ grids that
$$ \floor[\bigg]{ \frac{6n + 9}{5} } \leq \EDN(P_5 \square P_n) \leq \floor[\bigg]{ \frac{4n + 4}{3} }.$$

For a good survey on other related results and topics see Klostermeyer and Mynhardt~\cite{survey-article}.

Very little is known regarding the algorithmic aspects of m-eternal domination.
The decision problem (asking if $\EDN(G)\le k$) is obviously NP-hard and belongs to EXPTIME, however, it is not known
whether it lies in the class PSPACE (see \cite{survey-article}).
On the positive side, there is a linear algorithm for computing $\EDN$ for
trees by Klostermeyer and MacGillivray~\cite{eternal-dom-sets}.
Braga, de Souza and Lee~\cite{proper-interval-graphs} showed that $\EDN(G) = \alpha(G)$ in all proper-interval graphs.
Very recently Gupta et~al.~\cite{intervals} showed that
the maximum independent set in an interval graph on $n$ vertices
can be solved in time $\mathcal{O}(n \log n)$, or $\mathcal{O}(n)$ in the case
when endpoints of the intervals are already sorted.
We can thus compute $\EDN(G)$ efficiently on proper-interval graphs.

In this paper we contribute to the positive side
and provide an extension of the result by Klostermeyer and MacGillivray~\cite{eternal-dom-sets}.
\emph{Cactus} is a graph that is connected and its every edge lies on at most one cycle.
An equivalent definition is that it is connected and any two cycles have at most one vertex in common.
\emph{Christmas cactus graph} is a cactus in which each vertex is in at most two $2$-connected components.
Christmas cactus graphs were introduced by Leighton and Moitra~\cite{christmas-cactus}
in the context of greedy embeddings, where Christmas cactus graphs play an important
role in the proof that every polyhedral graph has a greedy embedding in the Euclidean plane.

Our main result is summarized in the following theorem.
\begin{restatable}{theorem}{cactusalgoritm}
  \label{thm:cactus-algorithm}
  Let $G$ be a Christmas cactus graph.
  Then $\EGC(G) = \EDN(G) = \EDE(G)$ and there exists a linear-time algorithm which computes these values.
\end{restatable}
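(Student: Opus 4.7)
The trivial inequalities $\EGC(G) \leq \EDN(G) \leq \EDE(G)$ hold for every graph, since relaxing the ``one guard per vertex'' rule can only help the defender, and removing the eviction attacks can only help as well. Hence my plan is to establish a single value $f(G)$ computed by the algorithm and prove the two non-trivial bounds: a \emph{lower bound} $\EGC(G) \geq f(G)$, witnessed by an adversarial attack strategy, and an \emph{upper bound} $\EDE(G) \leq f(G)$, witnessed by a constructive defender strategy that uses no vertex twice and also survives evictions.

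The structural core of the argument is to exploit the block-cut tree of $G$. In a Christmas cactus, every block is either a bridge or a cycle, and every vertex belongs to at most two blocks, so the block-cut tree has maximum ``block-degree'' two at each vertex. I would root this tree at an arbitrary block and process it bottom-up. For each subgraph $G_v$ hanging below a cut-vertex $v$, I would define a small, constant-sized set of \emph{interface states} describing what the defender can guarantee at $v$ after each round: for instance, ``$v$ is guarded and the subtree is safe,'' ``$v$ is unguarded but can be reclaimed in one step,'' and perhaps a few refinements needed to handle eviction of $v$ or of an edge incident to $v$. The DP would record, for each state, the minimum number of guards needed in $G_v$ to maintain an invariant that after any round ending in that state, the defender can still respond to any attack or eviction in $G_v$ in the next round.

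The main technical step is the per-block analysis: given a cycle block $B$ with one cut-vertex to the parent and several cut-vertices leading to already-processed children, determine how the finitely many local guard configurations on $B$ combine with the children's interface states to produce interface states at the parent cut-vertex, together with the minimum total guard count. Because each block has constantly many ``canonical'' guard placements (up to rotational symmetry on a cycle and combined with children states), and each vertex is in at most two blocks, the combination rules can be enumerated in $O(|B|)$ time per block, yielding a linear overall running time. The small, exact formulas for single cycles (following $\EDN(C_n)=\lceil n/3\rceil$) and for bridges anchor the base cases.

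The hardest part, in my view, is proving that the \emph{same} DP value serves as both a lower bound for $\EGC$ and an upper bound for $\EDE$. The lower bound requires exhibiting, for each interface state, an adversarial sequence of attacks (possibly spread across blocks and reused infinitely) that forces at least the claimed number of guards even when multiple guards per vertex are allowed; this is typically shown by a potential/accounting argument summing local lower bounds across blocks, together with a no-interference lemma saying that the cut-vertex bottleneck prevents guards from one subtree from helping another within a single round. The upper bound requires exhibiting an explicit defender strategy that (i) places at most one guard per vertex and (ii) handles eviction requests by always being able to vacate any single attacked vertex or edge in one move. I expect this to follow from stating the invariant carefully so that the ``reserve'' implicit in the interface states already encodes the ability to evacuate. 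Once both directions are proved blockwise, induction up the block-cut tree gives the global result and, combined with the DP, the linear-time algorithm.
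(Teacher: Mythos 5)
Your skeleton matches the paper's: both exploit the sandwich $\EGC(G)\le\EDN(G)\le\EDE(G)$ and prove equality by establishing a matching lower bound on $\EGC$ and upper bound on $\EDE$ via bottom-up processing of the block-cut tree. However, the two load-bearing components of your plan are left as acknowledged placeholders, and they are precisely where the paper's actual work lies. First, you never pin down the ``interface states'' or prove that the same numeric value is simultaneously a lower bound for $\EGC$ and an upper bound for $\EDE$; the paper's answer is that no state machine is needed at all, because the eviction variant $\EDE$ \emph{is} the strengthened induction hypothesis: the ability to clear a prescribed vertex or cycle edge of guards on demand is exactly the interface guarantee that lets a strategy for a leaf piece be glued to a strategy for the rest at an articulation vertex with zero extra guards (Lemma~\ref{lem:articulation}), and each leaf reduction (Reductions~\ref{rdc:leafedge}--\ref{rdc:pan}) shifts both $\EDE$ and $\EGC$ by exactly the same constant, so Lemma~\ref{lem:technique} closes the induction. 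Your plan treats eviction as an extra nuisance to be absorbed into the invariant, rather than as the mechanism that makes the composition work.

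Second, your proposed lower-bound technique --- an adversarial attack sequence plus a potential/accounting argument with a ``no-interference lemma'' across the cut-vertex --- is not what the paper does and is the step most likely to fail if attempted directly: guards \emph{can} cross an articulation vertex within a round, so a clean no-interference statement is false in general. The paper instead obtains the lower bound by contracting the removed leaf structure to a single vertex (Lemma~\ref{lem:contraction}), observing that domination of the removed piece forces a fixed number of guards to become permanently stationary at the contracted vertex, and deleting those redundant guards. Without supplying either this contraction argument or a correct substitute, and without an explicit finite state set with verified transitions, the proposal does not yet constitute a proof; it is a plausible roadmap whose two hardest legs remain untraveled. The linear-time claim is fine in outline and matches Theorem~\ref{thm:linear-alg}, but it inherits the same dependency on the unproven block-wise correctness.
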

Using Theorem~\ref{thm:cactus-algorithm} we are able to devise a new bound
on the m-eternal domination number of cactus graphs, which is stated in
Theorem~\ref{thm:cactus-upperbound} in Section~\ref{sec:bound}.
In Section~\ref{sec:algorithm} we provide the linear-time algorithm for computing $\EDN$ of Christmas cactus graphs.

\bigskip
Let us now introduce all concepts formally.
For an undirected graph $G$ let a \emph{configuration} be a multiset $\CONF=\{\spread[c,k]:c_i \in V(G)\}$.
We will refer to the elements of configurations as \emph{guards}.
\emph{Movement} of a guard $u\in\CONF$ means changing $u$ to some element $v\in N_G[u]$ of its closed neighborhood and we denote it by $u \rightarrow v$.
Two configurations $\CONF_1$ and $\CONF_2$ of $G$ are mutually \emph{traversable} in $G$
if it is possible to move each guard of $\CONF_1$ to obtain $\CONF_2$.
A \emph{strategy} in $G$ is a graph $S_G=(\CONFS,\FF)$ where $\CONFS$ is a set of configurations in $G$ of same size and
$\FF = \big\{\{\CONF_1,\CONF_2\}\in\CONFS^2 \mid \hbox{$\CONF_1$ and $\CONF_2$ are mutually traversable in $G$}\big\}$.
The \emph{order} of a strategy is the number of guards in each of its configurations.

We now define the variants of the problem which we study in our paper.
For the purpose of the proof of our main result we devise a variant of the problem, where
a vertex or an edge can be ``evicted'' during a round, that means, no guard may be
left on the respective vertex or edge.
We call the strategy $S_G$ to be \emph{defending against vertex attacks}
if for any $\CONF \in \CONFS$ the configuration $\CONF$ and its neighbors in $S_G$ cover all vertices of $G$,
i.e., when a vertex $v \in V(G)$ is \uv{attacked} one can always respond by changing to a configuration which has a guard at the vertex $v$.
Note that every configuration in a strategy which defends against vertex attacks induces a dominating set in $G$.
We call a strategy $S_G$ to be \emph{evicting vertices} if for any $\CONF \in \CONFS$ and any $u\in V(G)$ the configuration $\CONF$ has a neighbor $\CONF'$ in $S_G$ such that $u\notin \CONF'$,
i.e., when a vertex $v$ is \uv{to be evicted} one can respond by changing to a configuration where no guard is present at $v$.
We call a strategy $S_G$ to be \emph{evicting cycle edges} if for any $\CONF \in \CONFS$ and any edge $\{u,v\}\in E(G)$ lying in some cycle in $G$
the configuration $\CONF$ has a neighbor $\CONF'$ in $S_G$ such that $u,v\notin \CONF'$.
That means, when an edge is \uv{to be evicted} one can respond by changing to a configuration where no guards are incident to the edge.

Let the \emph{m-eternal guard strategy} in $G$ be a strategy defending against vertex attacks in $G$.
Let the m-eternal guard configuration number $\EGC(G)$ be the minimum order among all m-eternal guard strategies in $G$.
Let the \emph{m-eternal dominating strategy} in $G$ be a strategy in $G$
such that none of its configurations has duplicates and is also defending against vertex attacks.
The \emph{m-eternal dominating set} in $G$ is a configuration, which is contained in some m-eternal dominating strategy in $G$.
Let the m-eternal dominating number $\EDN(G)$ be the minimum order of m-eternal dominating strategy in $G$.
Let the \emph{m-eternal dominating strategy with eviction} in $G$ be a strategy such that none of its configurations has duplicates, is defending vertex attacks, is evicting vertices, and is evicting cycle edges in $G$.
Let the m-eternal dominating with eviction number $\EDE(G)$ be the minimum order of m-eternal dominating strategy evicting vertices and edges in $G$.

A cycle in $G$ is a \emph{leaf cycle} if exactly one of its vertices has degree greater than 2.
By $P_n$ we denote a path with $n$ edges and $n + 1$ vertices.
By $G[U]$ we denote the subgraph of $G$ induced by the set of vertices $U \subseteq V(G)$.




%
%
%

\section{The m-eternal domination of Christmas cactus graphs}\label{sec:cacti-graphs}

\iffullpaper
In this section we prove that $\EGC(G) = \EDN(G) = \EDE(G)$ for Christmas cactus graphs by showing the optimal strategy.
The main idea is to repeatedly use reductions of the Christmas cactus graph $G$ to produce smaller Christmas cactus graph $I$.
We prove that the optimal strategy for $G$ uses a constant number of guards more than the optimal strategy for $I$.

This will be one part of the proof of Theorem~\ref{thm:cactus-algorithm}.
Before we describe the reductions in detail, we present several technical tools that are used in the proofs of validity of the reductions.
\else
In this section we prove that $\EGC(G) = \EDN(G) = \EDE(G)$ for Christmas cactus graphs by showing the optimal strategy.
The main idea is to repeatedly use reductions of the Christmas cactus graph $G$ to produce smaller Christmas cactus graph $I$.
We prove that the optimal strategy for $G$ uses a constant number of guards more than the optimal strategy for $I$.
This will be one part of the proof of Theorem~\ref{thm:cactus-algorithm}.

We defer the technical proofs of reductions correctness to the full version of this paper which is available online\footnote{full paper URL}.
\todo{full paper URL in the footnote}
Before we describe the reductions, we present several technical tools that are used in the proofs of validity of the reductions and that give a hint into the machinery of the proof.
\fi

\begin{observation}\label{obs:edn-lb}
  Every strategy used in the \MED\ with eviction can be applied in an \MED\ strategy, and every \MED\ strategy can be applied as an m-eternal guard configuration strategy.
  Every configuration in each of these strategies must induce a dominating set, therefore, they are all lower bound by the domination number $\DN$.
  We see that the following inequality holds for every graph $G$.
  \[
    \DN(G) \le \EGC(G) \le \EDN(G) \le \EDE(G)
  \]
\end{observation}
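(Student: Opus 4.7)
The plan is to verify the chain of inequalities by exploiting the fact that each successive notion is defined by strengthening the requirements on the strategy, and hence optimizing over a smaller class of admissible strategies.

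First I would establish $\DN(G) \le \EGC(G)$. By definition an m-eternal guard strategy is a strategy defending against vertex attacks, which means that for every configuration $\CONF \in \CONFS$ and every vertex $v \in V(G)$, there is a neighboring configuration $\CONF'$ in $S_G$ containing $v$. Since $\CONF$ and $\CONF'$ are mutually traversable, some guard of $\CONF$ lies in $N_G[v]$, so $\CONF$ dominates $v$. As $v$ was arbitrary, $\CONF$ induces a dominating set of $G$, so its size is at least $\DN(G)$. Since the order of a strategy equals the size of each of its configurations, this yields $\DN(G) \le \EGC(G)$.

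Next I would deduce $\EGC(G) \le \EDN(G)$ and $\EDN(G) \le \EDE(G)$ by comparing the classes of admissible strategies. Every m-eternal dominating strategy is, by its defining conditions, a strategy defending against vertex attacks with the additional property of having no duplicate guards in any configuration. Forgetting the duplicate-free condition, such a strategy is in particular an m-eternal guard strategy of the same order; thus the minimum order over m-eternal dominating strategies is no smaller than the minimum over m-eternal guard strategies. An analogous argument applies to $\EDE$: an m-eternal dominating strategy with eviction is an m-eternal dominating strategy that additionally evicts vertices and cycle edges, so restricting to the smaller class can only increase the minimum order, giving $\EDN(G) \le \EDE(G)$.

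The overall inequality $\DN(G) \le \EGC(G) \le \EDN(G) \le \EDE(G)$ then follows by chaining these three comparisons. There is no genuine obstacle here since everything follows directly from unwinding the definitions; the only point requiring care is making explicit that the defense-against-vertex-attacks condition truly forces every configuration to be a dominating set, which is exactly what is used in the first inequality and implicitly in the other two.
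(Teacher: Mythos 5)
Your proposal is correct and follows essentially the same route as the paper: the inequalities $\EGC \le \EDN \le \EDE$ come from the fact that each successive notion restricts the admissible strategies, and $\DN \le \EGC$ comes from observing that mutual traversability forces every configuration of a vertex-defending strategy to be a dominating set. The paper states this at the same level of detail (in fact less), so there is nothing to add.
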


Note that we can prove bounds on all of these strategies by showing that for $G$ and its reduction $I$ it holds that $\EDE(G) \leq \EDE(I)+k$ and $\EGC(I) \leq \EGC(G)-k$ for some integer constant $k$.
If we have an exact result for $I$ the reduction gives us an exact bound on $G$ as well.
This is summed up in the following lemma.

\begin{lemma}\label{lem:technique}
  Let us assume that for graphs $G$, $I$, and an integer constant $k$
  \begin{align}
    \label{ub} \EDE(G) &\leq \EDE(I) + k,\\
    \label{lb} \EGC(G) &\geq \EGC(I) + k,\\
    \label{ih} \EDE(I) &= \EGC(I).
  \end{align}
  Then $\EDE(G) = \EGC(G)$.
\end{lemma}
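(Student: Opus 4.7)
The proof plan is to chain the three hypotheses together with the universal inequality from Observation~\ref{obs:edn-lb} to obtain both directions of the desired equality. The hypotheses already give one direction almost for free, so the main work is simply to recognize that the observation supplies the reverse inequality.

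More concretely, first I would invoke Observation~\ref{obs:edn-lb} to get $\EGC(G) \leq \EDE(G)$ for free, which is one half of the claim. For the other direction, I would stack the three given inequalities in order: starting from $\EDE(G)$, apply (\ref{ub}) to get $\EDE(G) \leq \EDE(I)+k$, then apply (\ref{ih}) to rewrite the right-hand side as $\EGC(I)+k$, and finally apply (\ref{lb}) (read as $\EGC(I)+k \leq \EGC(G)$) to conclude $\EDE(G) \leq \EGC(G)$. Combining the two directions yields $\EDE(G) = \EGC(G)$, as required.

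There is essentially no obstacle here; the lemma is a purely formal packaging of the reduction scheme. Its role is stylistic: it isolates the general \emph{sandwich} argument so that in each individual reduction we only need to verify the two one-sided numerical inequalities (\ref{ub}) and (\ref{lb}) plus the inductive hypothesis (\ref{ih}), without rearguing how these combine. The only thing worth emphasizing in writing the proof is that $k$ may in principle be negative, which is fine since the argument uses only addition and the transitivity of $\leq$, and that the equality $\EDN(G) = \EDE(G) = \EGC(G)$ (which is what we actually need for Theorem~\ref{thm:cactus-algorithm}) then follows immediately by squeezing $\EDN(G)$ between $\EGC(G)$ and $\EDE(G)$ via Observation~\ref{obs:edn-lb}.
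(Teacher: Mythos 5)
Your proposal is correct and matches the paper's proof exactly: both chain $\EDE(G) \leq \EDE(I)+k = \EGC(I)+k \leq \EGC(G)$ using (\ref{ub}), (\ref{ih}), (\ref{lb}) in that order, and both obtain the reverse inequality $\EGC(G) \leq \EDE(G)$ from Observation~\ref{obs:edn-lb}. Your additional remarks about the sign of $k$ and the role of the lemma are accurate but not needed.
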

\begin{proof}
  Given the assumptions, we get $\EDE(G) \leq \EGC(G)$ in the following manner.
  \[
    \EDE(G) \leq^{(\ref{ub})} \EDE(I) + k =^{(\ref{ih})} \EGC(I) + k \leq^{(\ref{lb})} \EGC(G)
  \]
  Recall Observation~\ref{obs:edn-lb} where we saw that $\EGC(G) \leq \EDE(G)$ holds, giving us the desired equality.
\qed\end{proof}

Let us have a graph $G$ with a strategy.
By \emph{simulating} a vertex attack, a vertex eviction, or an edge eviction on $G$, we mean performing the attack on $G$ and retrieving the strategy's response configuration.
Simulating attacks is useful mainly in merging several strategies over subgraphs into a strategy for the whole graph.

In the following theorem we introduce a general upper bound applicable to the m-eternal dominating strategy with eviction.

\begin{lemma}\label{lem:articulation}
  Let $G$ be a graph with an articulation $v$ such that $G \setminus v$ has two connected components $H$ and $I'$ such that there are exactly two vertices $u$ and $w$ in $V(I')$ which are neighbors of $v$.
  Let $I= \big(V(I'), E(I') \cup \{\{u,w\}\}\big)$.
  If $\{u,w\}$ lies on a cycle in $I$ then
  \[
    \EDE(G) \le \EDE(H) + \EDE(I).
  \]
\end{lemma}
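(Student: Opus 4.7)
The plan is to construct a strategy $\sigma_G$ for $G$ of order $\EDE(H)+\EDE(I)$ by running optimal $\EDE$-strategies $\sigma_H$ for $H$ and $\sigma_I$ for $I$ simultaneously, with carefully coordinated handling of the articulation vertex~$v$. I would place the $\EDE(H)$ guards on $V(H)$ according to an initial configuration of $\sigma_H$ and the $\EDE(I)$ guards on $V(I')$ according to an initial configuration of $\sigma_I$. The invariant of $\sigma_G$ to be maintained is: the current $G$-configuration equals a $\sigma_H$-configuration on $V(H)$ together with a $\sigma_I$-configuration on $V(I')$, up to possibly relocating one guard from $u$ or $w$ to $v$. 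Under this invariant, every reachable $\sigma_G$-configuration is a dominating set of $G$, since $v$ is dominated either by itself or by one of $u$, $w$, or an $H$-neighbor.

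Next, I would define the transitions of $\sigma_G$ by simulating $\sigma_H$ and $\sigma_I$ in parallel and translating each $G$-action as follows. An attack or eviction that lies inside $V(H)$ (respectively $V(I')\setminus\{u,w\}$) is forwarded to $\sigma_H$ (respectively $\sigma_I$), while the other side stays put. Attacks or evictions on $u$ or $w$ are forwarded to $\sigma_I$; the only subtlety is that $\sigma_I$'s response may move a guard across the artificial edge $\{u,w\}$, which is absent in $G$. In that case we route the guard through $v$, using the slack in the invariant that allows a guard intended for $u$ or $w$ to sit on $v$ temporarily. For an attack on $v$, the invariant guarantees that some neighbor of $v$ carries a guard: if this is an $H$-neighbor, move it using the simulated eviction of that vertex in $\sigma_H$; if it is $u$ or $w$, invoke $\sigma_I$'s edge-eviction of $\{u,w\}$ and divert the displaced guard to $v$ instead of to its $\sigma_I$-destination. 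Vertex and edge evictions in $G$ that involve $v$ are handled analogously, ending without a guard on $v$.

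The main obstacle is verifying that each $\sigma_G$-action preserves the invariant so that attacks can be countered indefinitely. The critical case is when $v$ is attacked or evicted and the guard that lands on (or leaves) $v$ originates at $u$ or $w$. This is exactly where the hypothesis that $\{u,w\}$ lies on a cycle in $I$ is essential: it guarantees that $\sigma_I$ admits an edge-eviction of $\{u,w\}$, which provides the only $\sigma_I$-consistent way of freeing up or supplying a guard at $u$ or $w$ in $G$ while leaving the rest of $I$'s configuration coherent. I expect the verification to be a short case analysis on whether $v$ currently holds a guard and on which of $u$, $w$ is occupied in the underlying $\sigma_I$-configuration; once this is dispatched, the bound $\EDE(G)\le\EDE(H)+\EDE(I)$ follows directly by counting the two groups of guards.
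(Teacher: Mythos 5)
Your high-level plan --- run the two strategies in parallel, use the cycle hypothesis so that $I$'s strategy supports edge evictions of $\{u,w\}$, and realize crossings of the artificial edge $\{u,w\}$ by routing through $v$ --- matches the paper's, but the invariant you maintain is too weak to make the routing work, and that is exactly where the proof lives. A guard of $\sigma_I$ that is prescribed to move $u\rightarrow w$ cannot reach $w$ in a single round of $G$; parking it ``temporarily'' on $v$ means that for at least one round $w$ is unoccupied while $\sigma_I$ believes it is occupied, so an attack on $w$ (or on any vertex that $\sigma_I$ defends with that guard) goes unanswered, and the configuration restricted to $V(I')$ is no longer a $\sigma_I$-configuration, so you cannot continue simulating $\sigma_I$ afterwards. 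The missing ingredient is the paper's coordinating invariant: \emph{in every configuration either $v$ is occupied or both $u$ and $w$ are evicted}. It is maintained by always issuing a simulated request to the idle component --- when the real attack lies in $I'$, simulate an attack on $v$ in $H$'s strategy; when it lies in $H$, simulate an edge eviction of $\{u,w\}$ in $I$'s strategy --- rather than letting the other side ``stay put'' as you propose. Under that invariant a crossing $u\rightarrow w$ can only be demanded when a guard already stands on $v$, and it is then realized in one round by the simultaneous swap $u\rightarrow v$, $v\rightarrow w$, which leaves both the $H$-part and the $I$-part of the configuration intact.

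A second, related gap concerns $v$ itself. As the lemma is used in the paper (see the proof of Lemma~\ref{lem:c3kp1}), $v$ is a vertex of $H$, so every configuration of $H$'s strategy dominates $v$ and an attack on $v$ is answered inside $H$. In your version $v$ belongs to neither part, and your claim that ``the invariant guarantees that some neighbor of $v$ carries a guard'' does not follow from your invariant: a dominating configuration of $H$ need not meet $N(v)\cap V(H)$, and a dominating configuration of $I$ need not meet $\{u,w\}$, so $v$ may fail to be dominated at all. Your fallback of answering an attack on $v$ by evicting an $H$-neighbor and diverting that guard onto $v$ also breaks your own invariant, since the $H$-part of the configuration then differs from every $\sigma_H$-configuration by a guard that has left $V(H)$, and nothing guarantees the remaining guards still dominate $H$.
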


\begin{proof}
  We will show that having two separate strategies for $H$ and $I$ we can merge them into one strategy for $G$ without using any additional guards.
  \begin{figure}[h]
    \centering
    \includegraphics{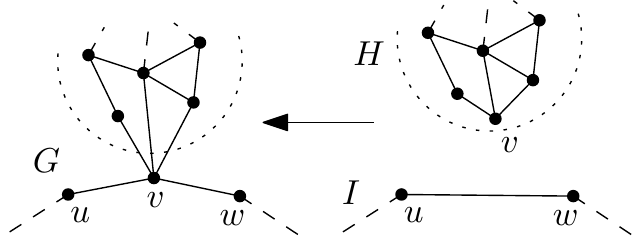}
    \caption{Decomposition of $G$ into $H$ and $I$}%
    \label{fig:upperbound}
  \end{figure}

  We will create a strategy which keeps the invariant that in all its configurations either $v$ is occupied by a guard or the pair of vertices $u$ and $w$ are evicted.
  This will ensure that whenever $v$ is not occupied due to the strategy of $H$ needing a guard from $v$ to defend other vertices of $H$, the strategy of $I$ will be in a configuration where no guard can traverse the $\{u,w\}$ edge.

  Let the initial configuration be a combination of a configuration of $H$ which defends $v$ and a configuration of $I$ which evicts $\{u,w\}$.
  The final strategy will consist of configurations which are unions of configurations of $H$ and $I$ which we choose in the following manner.

  The vertices of $G$ were partitioned among $H$ and $I$ so a vertex attack can be distinguished by the target component.
  Whenever a vertex $z$ of $G$ is attacked, choose a configuration of respective component which defends $z$.
  If $H$ was not attacked then simulate an attack on $v$.
  If $I$ was not attacked then simulate an edge eviction on $\{u,w\}$.
  By the configuration of the non-attacked component we ensure the invariant is true.
  Whenever the $\{u,w\}$ edge might be traversed by a guard in the $I$'s strategy,
  we use the fact that $v$ is occupied and instead of performing $u \rightarrow w$ we move the guards $u \rightarrow v$ and $v \rightarrow w$ which has the same effect considering guard configuration of $I$.

  The eviction of vertices and edges present in $H$ and $I$ is solved in the same way as vertex attacks.
  The only attack which remains to be solved is an edge eviction of $\{u,v\}$ or $\{w,v\}$.
  Both of these are defended by simulating an eviction of $v$ in $H$ and $\{u,w\}$ in $I$.
  The two strategies will ensure there are no guards on either $u$, $v$, nor $w$ and the invariant is still true.
\qed\end{proof}

\iffullpaper
Let $G/e$ for $e\in E(G)$ denote contraction of the edge $e$ in $G$.
We show that contracting an edge will not break an m-eternal guard configuration strategy.
\begin{lemma}\label{lem:contraction}
  Let $G$ be a graph and $e$ be its edge.
  Then for a graph $I = G / e$ ($G$ after contraction of $e$)
  \[
    \EGC(I) \le \EGC(G).
  \]
  Moreover, there is a strategy which differs only in vertices incident to $e$.
\end{lemma}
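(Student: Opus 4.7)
The plan is to construct a strategy for $I$ directly from a witnessing m-eternal guard strategy $S_G = (\CONFS, \FF)$ for $G$ of order $\EGC(G)$, by pushing every configuration through the natural projection induced by the contraction. Write $e = \{u,v\}$ and let $w$ denote the new vertex of $I$ obtained by identifying $u$ and $v$. Define the map $\phi$ on multisets over $V(G)$ by replacing every occurrence of $u$ or $v$ by $w$ and leaving all other entries untouched. Let $\CONFS' = \{\phi(\CONF) \mid \CONF \in \CONFS\}$ and take $S_I$ to be the strategy whose configurations are $\CONFS'$, with the edges $\FF'$ induced by mutual traversability in $I$. Since $\phi$ preserves the multiset size, the order of $S_I$ is at most $\EGC(G)$, so proving that $S_I$ defends against vertex attacks will give the bound.

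First I would verify that $\phi$ preserves traversability: if $\{\CONF_1,\CONF_2\}\in\FF$, witnessed by a matching of moves $x\to y$ with $y \in N_G[x]$, then translating each such move by $\phi$ yields a legal move in $I$. Indeed, moves disjoint from $\{u,v\}$ are unchanged; a move $u\to v$ or $v\to u$ or a stay at $u$ or $v$ becomes a stay $w\to w$; a move between a vertex $x\notin\{u,v\}$ and a vertex of $\{u,v\}$ becomes a move between $x$ and $w$, which is legal because $N_I(w)=(N_G(u)\cup N_G(v))\setminus\{u,v\}$. Hence $\{\phi(\CONF_1),\phi(\CONF_2)\}$ is an edge of $S_I$ (or the two projections coincide, which is also fine).

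Next I would check that $S_I$ defends against vertex attacks. Let $\CONF'=\phi(\CONF)\in\CONFS'$ and let $z\in V(I)$ be attacked. If $z\neq w$, then $z\in V(G)$, so in $S_G$ there is a neighbor $\CONF_2$ of $\CONF$ containing $z$; its projection $\phi(\CONF_2)$ still contains $z$ and is adjacent to $\CONF'$ in $S_I$. If $z=w$, simulate an attack on $u$ in $S_G$: the response $\CONF_2$ contains $u$, so $\phi(\CONF_2)$ contains $w$, as required. This completes the construction, and because $\phi$ is the identity outside $\{u,v\}$, only guards sitting on $u$ or $v$ behave differently in $S_I$ than in $S_G$, yielding the \emph{moreover} clause.

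The main subtlety I expect is bookkeeping around the two-to-one nature of $\phi$ on $\{u,v\}$: one must confirm that when a guard originally at $v$ moves to $u$ in $G$ (or vice versa) the translated strategy remains well-defined as a matching of moves in $I$, and that the multiplicities in $\phi(\CONF_1)$ and $\phi(\CONF_2)$ are consistent with a single legal move system. Once the observation that such a move collapses to a stay at $w$ is made, nothing deeper is needed, and the three traversability cases above cover every possibility.
\qed
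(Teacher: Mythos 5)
Your proposal is correct and follows essentially the same route as the paper: project every configuration of an optimal strategy for $G$ through the map that substitutes $u$ and $v$ by $w$, observe that moves along $e$ collapse to stays at $w$ and all other moves remain legal, so traversability and domination are preserved. The paper's own proof is just a terser version of this argument, so your additional case analysis of $\phi$ on moves is fine but not a departure.
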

\begin{proof}
  Let $\{u,v\} = e$ and let $w\in V(I)$ be the vertex after the contraction of $e$.
  Let the m-eternal guard configuration for $I$ be the m-eternal guard configuration of $G$ where in every configuration each $u$ and $v$ is substituted with $w$.
  Any movement of guards in the original strategy is still possible in the new one.
  The only change is that instead of moving among $u$ and $v$ the guard will stay on $w$.
  Hence the traversable configurations will stay traversable.
  We devised a strategy for $I$ using the same number of guards as was used in the original strategy for $G$.
\qed\end{proof}
\fi

We may now proceed with the reductions.

\begin{lemma}\label{lem:cycle}
  Let $C_k$ be a cycle on $k$ vertices. Then $\DN(C) = \EGC(C) = \EDN(C) = \EDE(C) = \ceil[{\big}]{\frac{k}{3}}$.
\end{lemma}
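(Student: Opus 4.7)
The plan is to collapse all four quantities at once via the chain $\DN(C_k) \le \EGC(C_k) \le \EDN(C_k) \le \EDE(C_k)$ from Observation~\ref{obs:edn-lb}. The lower bound $\DN(C_k) \ge \ceil{k/3}$ is classical --- each closed neighborhood in $C_k$ covers three vertices, so a dominating set has size at least $k/3$ --- and the upper bounds $\EGC(C_k), \EDN(C_k) \le \ceil{k/3}$ are immediate from the result $\EDN(C_n) = \ceil{n/3}$ of Goddard et al.\ cited in the introduction. The only novel content is therefore $\EDE(C_k) \le \ceil{k/3}$, which I would prove by exhibiting an explicit strategy of that order.

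For that construction I take $g = \ceil{k/3}$ guards. Label the vertices $v_0, \dots, v_{k-1}$ cyclically and, for each $s \in \{0, 1, \dots, k-1\}$, define the rotation $P_s = \{v_{(s+3i) \bmod k} : 0 \le i < g\}$. Let $\CONFS$ consist of all these rotations and put $\{P_s, P_{s'}\} \in \FF$ whenever $P_s$ and $P_{s'}$ are mutually traversable in $C_k$; in particular $\{P_s, P_{s \pm 1}\} \in \FF$ because every guard can slide a single step in the chosen direction. Given a source rotation $P_s$ I then verify that some mutually traversable target $P_{s'}$ answers each attacker move: (a) $v_j \in P_{s'}$ for any attacked vertex, (b) $v_j \notin P_{s'}$ for any evicted vertex, and (c) $\{v_j, v_{j+1}\} \cap P_{s'} = \emptyset$ for any evicted cycle edge. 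Claims (a) and (b) follow easily from two elementary facts about the index set $\{0, 3, \dots, 3(g-1)\}$: its longest cyclic gap has length at most $3$ (the single wrap-around gap has size $k - 3(g-1) \in \{1, 2, 3\}$) and it never contains three cyclically consecutive integers modulo $k$. Consequently $P_{s-1} \cup P_s \cup P_{s+1}$ already covers $V(C_k)$, and at least one of these three rotations omits any given vertex.

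The main obstacle is claim (c): there are source rotations from which no $\pm 1$ shift evicts a given edge --- for example, in $C_8$ the edge $\{v_0, v_1\}$ is covered by each of $P_0, P_1, P_2$, so from $P_1$ only more distant rotations qualify. The rescue is that shifts larger than $\pm 1$ can still be mutually traversable whenever the sets $P_s$ and $P_{s'}$ overlap enough that only the guards in the symmetric difference actually need to move; in the $C_8$ example, $P_4$ is reachable from $P_1$ by moving just one guard one step. I would make this precise by characterising, via a Hall-type matching argument on the bipartite distance-one graph between $P_s$ and $P_{s+d}$, which shifts $d$ are admissible, and then verify case (c) by a short analysis of $j$ modulo $3$ together with a subcase for the wrap-around gap of size $k - 3(g-1)$. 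The smallest cycles $k \in \{3, 4, 5\}$ I would dispatch by direct enumeration.
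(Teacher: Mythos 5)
Your proposal is correct, but it resolves the crux of the lemma --- edge eviction --- by a genuinely different mechanism than the paper. The paper does not confine itself to the $k$ cyclic rotations of the evenly spaced pattern: its configuration space is, implicitly, all dominating configurations of size $\ceil{k/3}$ (equivalently, all placements with consecutive guard gaps at most $3$), and it answers an eviction of an edge $\{e,f\}$ by a purely \emph{local} move --- either rotate by one step to slide an already-existing unguarded pair onto $\{e,f\}$, or push the guards on $e$ and $f$ outward to their other neighbors $d$ and $g$. For $k\equiv 1\pmod 3$ this split move genuinely leaves your rotation family (it can turn the gap pattern $(1,3,\dots,3)$ into $(2,2,3,\dots,3)$), so the two strategies are not the same strategy in disguise. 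You instead stay inside the rigid family of rotations and pay with longer shifts; the price is exactly the traversability claim you defer to a Hall-type argument. That deferred step does go through, and more easily than you suggest: a shift of $+2$ is realized by the explicit matching $v_{s+3i}\to v_{s+3i-1}$ for $i\ge 1$ together with $v_s\to v_{s+1}$ (resp.\ $v_s\to v_s$ for $k\equiv 2\pmod 3$), so shifts of magnitude at most $2$ are always traversable; and since the indices $s'$ of rotations evicting a fixed edge form a cyclic set of maximum gap $5$, a shift in $\{-2,\dots,2\}$ always suffices (in your $C_8$ example, $P_7$ is already reachable from $P_1$ by a $-2$ shift, so the $+3$ shift to $P_4$ is never needed). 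In short: your route buys a clean, explicitly enumerated strategy graph at the cost of a traversability lemma; the paper's route keeps every response local and immediately verifiable at the cost of a looser, larger configuration space (where one must also observe that the outward-push never stacks two guards on one vertex, which the gap structure guarantees). Either way the remaining inequalities collapse via Observation~\ref{obs:edn-lb} exactly as you and the paper both argue.
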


\iffullpaper
\bigskip
  \begin{proof}
  It is easy to see, that the domination number of a cycle on $k$ vertices is $\big\lceil\frac{k}{3}\big\rceil$ hence it suffices to show that the m-eternal domination with eviction number is $\big\lceil\frac{k}{3}\big\rceil$ and by Observation~\ref{obs:edn-lb} we get the desired equality.

  First, note that, in every dominating configuration, two guards can be at most three edges apart.
  Any vertex attack can be defended by moving all guards in the configuration along the cycle in one direction or the other.
  \begin{figure}[h]
    \centering
    \includegraphics{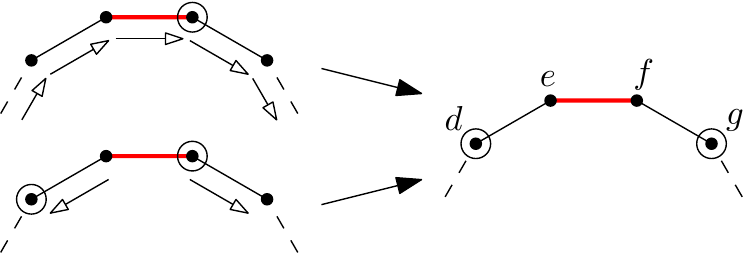}
    \caption{Two cases which can happen while evicting an edge on a cycle.}%
    \label{fig:cycle}
  \end{figure}

  Evicting a vertex can be done by evicting an incident edge since every vertex is on the cycle.
  To evict an edge $\{e,f\}$ we look at guards on the four closest vertices $(d,e,f,g)$ (in order), see Figure~\ref{fig:cycle}.
  If one incident pair of these vertices is not guarded we can rotate the configuration to move the guard gap over to $\{e,f\}$.
  If that is not the case perform $e \rightarrow d$ if $e$ is guarded and $f \rightarrow g$ if $f$ is guarded.
  Note that the new configuration has guards on $d$ and $g$ because if there was no guard on $d$ and $e$ then we would rotate the configuration (and similarly for $f$ and $g$).
  We did not move any guard which dominates vertices from the rest of the graph, and the four closest vertices are dominated by $d$ and $g$, hence the final configuration is dominating.
\qed\end{proof}
\fi

\begin{restatable}{reduction}{reductionleafedge}%
  \label{rdc:leafedge}
  Let $G$ be a Christmas cactus graph and $u$ be a leaf vertex which is connected to a vertex $v$ of degree $2$.
  Remove $u$ and $v$ from $G$.
\end{restatable}
\begin{lemma}\label{lem:leafedge}
  Let $G$ be a graph satisfying the prerequisites of Reduction~\ref{rdc:leafedge}.
  Let $I$ be $G$ after application of Reduction~\ref{rdc:leafedge}.
  Then $I$ is a Christmas cactus graph and $\EDE(G) = \EGC(G) = \EGC(I) + 1 = \EDE(I) + 1$.
\end{lemma}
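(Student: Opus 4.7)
The plan is to verify that $I$ is still a Christmas cactus (vertex removal preserves the property), and then apply Lemma~\ref{lem:technique} with $k = 1$, using the inductive hypothesis $\EDE(I) = \EGC(I)$ available for smaller Christmas cactus graphs. This reduces the problem to two inequalities, $\EDE(G) \le \EDE(I) + 1$ and $\EGC(G) \ge \EGC(I) + 1$, after which the four-way equality follows by chaining with Observation~\ref{obs:edn-lb}.

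For the upper bound, I would take an optimal eviction strategy for $I$ and place one additional guard initially at $v$, maintaining the invariant that this extra guard always resides on $\{u, v\}$. An attack on $u$ (respectively $v$) is answered by sliding the extra guard onto $u$ (respectively $v$); attacks and evictions on vertices of $I$ use the $I$-strategy unchanged, with the extra guard slid back to $v$ whenever it was displaced. The edges $\{u, v\}$ and $\{v, w\}$ are bridges, since $u$ is a leaf and $v$ has degree exactly $2$, so they are never the target of a cycle-edge eviction, and the remaining cycle-edge evictions are handled entirely by the $I$-strategy. Every combined configuration is dominating in $G$ because the extra guard covers $u$ and $v$, and the $I$-strategy keeps $V(I)$ dominated in $I$, hence in $G$.

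For the lower bound, I would take any $G$-strategy $\mathcal{S}_G$ of order $k$ and project each $\CONF \in \mathcal{S}_G$ to
\[
  \CONF^I \;=\; (\CONF \cap V(I)) \;\cup\; \big\{(n_u + n_v - 1)\text{ copies of } w\big\},
\]
where $n_u, n_v$ are the guard multiplicities of $\CONF$ at $u, v$. Since $\CONF$ must dominate $u$, we have $n_u + n_v \ge 1$, so $|\CONF^I| = k - 1$. Vertices of $V(I) \setminus \{w\}$ are dominated by $\CONF \cap V(I)$ automatically; $w$ is dominated once any extra $w$-guard appears, i.e., whenever $n_u + n_v \ge 2$, and also in the sub-case $n_u = 1, n_v = 0$ because $u$ never helped $w$ in $G$ anyway. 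The delicate sub-case is $n_u = 0, n_v = 1$ with $N_I[w] \cap \CONF = \emptyset$; here I argue this configuration cannot appear in a valid strategy, because the only guard that could reach $w$ in one move is the lone guard at $v$, and moving it to $w$ leaves $u$ undominated in the response, precluding any defense of an attack on $w$. Traversability lifts naturally: body guards perform their $G$-moves (which stay inside $V(I)$), the original $w$-guards mirror their $G$-moves with any move to $v$ reinterpreted as staying at $w$, and the $w$-extras stay at $w$; a direct count confirms that the $w$-multiplicities in $\CONF_1^I$ and $\CONF_2^I$ match. An attack on any $z \in V(I)$ is defended because $z$ lies in the corresponding $G$-response and therefore in $\CONF_2^I$.

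The main obstacle is exactly the degenerate lower-bound case $n_u = 0, n_v = 1$, where the naive projection threatens to lose domination of $w$; its resolution uses the crucial fact that each guard performs at most one move per round, so the lone tail guard cannot simultaneously defend $u$ and reach $w$, and hence such a configuration is incompatible with a strategy that must defend an attack on $w$.
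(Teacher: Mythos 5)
Your proof is correct and follows essentially the same route as the paper's: add one guard confined to the pendant path $\{u,v\}$ for the upper bound, and for the lower bound use that some guard must always sit on $\{u,v\}$, delete it and push any excess onto $w$, then close via Lemma~\ref{lem:technique} and Observation~\ref{obs:edn-lb}. Your explicit treatment of the degenerate case $n_u=0$, $n_v=1$ with no guard in $N_I[w]$ is an extra level of care that the paper's terser argument glosses over, but it does not change the underlying approach.
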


\iffullpaper
\bigskip
\begin{proof}
  Let us take a graph $G$ with a vertex $x$ of degree $2$ which is connected to a leaf vertex $y$ and an articulation $v$ of $G$.
  Let $I$ be $G$ where vertices $x$ and $y$ are removed.

  Assume we have an optimal m-eternal domination strategy for $I$.
  Extend it by adding one guard on $\{x,y\}$ who can both defend and evict its vertices.
  This guard does not interfere with the rest of the strategy in any way.
  Therefore $\EDE(G)\le\EDE(I)+1$.

  Assume we have an optimal m-eternal guard configuration for $G$.
  For $y$ to be dominated there needs to be a guard on either $x$ or $y$.
  Remove vertices $x$ and $y$ along with the guard which is always present there from $G$ to obtain $I$ and move any excess guards to $v$.
  All remaining guards can substitute any movement among $v, x,$ and $y$ by staying on $v$.
  Therefore $\EGC(I) \le \EGC(G)-1$.

  Having the two inequalities we use the Lemma~\ref{lem:technique} to obtain the desired equality.
\qed\end{proof}
\fi

\begin{restatable}{reduction}{reductionleafclique}%
  \label{rdc:leafclique}
  Let $G$ be a Christmas cactus graph and $u$ be a leaf vertex which is connected to a vertex $v$ of degree $3$.
  Let the vertex $v$ has neighbors $u, x, y$ such that $x$ and $y$ are not connected.
  Remove $u$ and $v$ from $G$ and connect $x,y$ by an edge.
\end{restatable}

\begin{lemma}\label{lem:leafclique}
  Let $G$ be a graph satisfying the prerequisites of Reduction~\ref{rdc:leafclique}.
  Let $I$ be $G$ after application of Reduction~\ref{rdc:leafclique} on $u,v,x$ and $y$.
  Then $I$ is a Christmas cactus graph and $\EDE(G) = \EGC(G) = \EGC(I) + 1 = \EDE(I) + 1$.
\end{lemma}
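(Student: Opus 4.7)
My plan is to follow the template of Lemma~\ref{lem:leafedge}: verify that $I$ is still a Christmas cactus, prove the two bounds $\EDE(G)\le\EDE(I)+1$ and $\EGC(G)\ge\EGC(I)+1$, and then invoke Lemma~\ref{lem:technique} together with Observation~\ref{obs:edn-lb} to chain the equalities.

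I first check that $I$ is a Christmas cactus. Since the bridge $\{u,v\}$ already places $v$ in one $2$-connected component and a Christmas cactus allows at most two such components per vertex, neither $vx$ nor $vy$ can be a second bridge; the edges $vx$ and $vy$ must therefore lie together on a common cycle $C$ through $v$. Because $x$ and $y$ are not adjacent in $G$, the length of $C$ is at least $4$, and deleting $u,v$ while adding $\{x,y\}$ replaces $C$ by a cycle of length $|C|-1\ge 3$ and leaves every other block of $G$ unchanged; hence $I$ is a Christmas cactus.

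For the upper bound, $v$ is an articulation whose removal separates the single-vertex component $H=G[\{u\}]$ from the component induced on $V(I)$, and the only neighbours of $v$ in the latter are $x$ and $y$. The graph $(V(I), E(G[V(I)])\cup\{\{x,y\}\})$ coincides with $I$, and $\{x,y\}$ lies on the shorter cycle we just exhibited, so Lemma~\ref{lem:articulation} yields $\EDE(G)\le\EDE(H)+\EDE(I)=1+\EDE(I)$. For the lower bound I take an optimal m-eternal guard strategy for $G$ with $\EGC(G)$ guards and, using the fact that $u$ is a leaf and therefore every configuration has a guard on $\{u,v\}$, first relocate every guard from $u$ to $v$ (legal for $\EGC$ since duplicates are allowed). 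I then apply Lemma~\ref{lem:contraction} twice, contracting $\{u,v\}$ and then $\{v,x\}$; the resulting graph is isomorphic to $I$, the derived strategy uses $\EGC(G)$ guards, and the merged vertex identified with $x\in V(I)$ always carries at least one guard. Deleting this dedicated guard produces a strategy for $I$ with $\EGC(G)-1$ guards; any remaining guards that the contractions piled on top of the dedicated one are repositioned as placeholders on $x$ or $y$, and the newly added edge $\{x,y\}$ makes a placeholder on either endpoint dominate both.

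The main obstacle is certifying that every configuration remains a dominating set of $I$ after this last deletion. The delicate case is a configuration of $G$ in which $v$ held only the dedicated guard while $x$, $y$, and every non-$v$ neighbour of $x$ and $y$ were unguarded; naive deletion would then leave $x$ or $y$ undominated. I would address this by choosing per configuration whether to send the contraction's surplus to $x$ or to $y$, and by an amortised modification of the original strategy, using the cycle through $v,x,y$, that eliminates such problematic configurations at no extra guard cost. Once both bounds are in place, Lemma~\ref{lem:technique} with $k=1$ yields $\EDE(G)=\EGC(G)$, and combining with Observation~\ref{obs:edn-lb} gives the full chain $\EDE(G)=\EGC(G)=\EGC(I)+1=\EDE(I)+1$.
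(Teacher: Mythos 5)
Your overall architecture coincides with the paper's: the upper bound comes from Lemma~\ref{lem:articulation} applied to the articulation $v$ with $H=\{u\}$ and $I'$ the remaining component, and the lower bound from two applications of Lemma~\ref{lem:contraction} merging $u,v,x$ into the vertex $x$ of $I$, after which the guard pinned to that merged vertex is deleted; Lemma~\ref{lem:technique} and Observation~\ref{obs:edn-lb} then close the chain. The verification that $I$ is a Christmas cactus and the upper bound $\EDE(G)\le\EDE(I)+1$ are correct and match the paper.

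The lower bound, however, contains a genuine gap, and it sits exactly where you flag it. Deleting the guard that the contractions force onto $x$ is legitimate only if every truncated configuration still dominates $I$ and the truncated strategy still answers every attack. You correctly identify the bad case --- a configuration of $G$ in which the guard at $v$ is the sole dominator of $x$ and $y$ (take, e.g., a long cycle through $v,x,\dots,y$ with the remaining guards far from both $x$ and $y$) --- and then substitute a promise of an ``amortised modification \dots at no extra guard cost'' for an actual argument. That is a placeholder, not a proof, and the concrete device you hint at cannot work as stated: in the bad configuration the merged vertex carries exactly one guard, so there is no ``surplus'' to route to $x$ or to $y$, and no redistribution of the existing guards dominates both endpoints of the long $x$--$y$ path. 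The missing content is precisely an argument that the pinned guard is redundant for defending $I$. The paper supplies it by noting that the only transition in which the $\{u,v\}$ guard participates is the pass-through $x\rightarrow v$, $v\rightarrow y$, which after contraction can be realised as the direct move $x\rightarrow y$ along the new edge $\{x,y\}$, so one guard at the merged vertex may be taken to be permanently stationary and is then discarded as non-critical. Whether one argues this way or by modifying the original strategy as you propose, this step is the entire substance of the inequality $\EGC(I)\le\EGC(G)-1$, and your write-up leaves it open.
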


\iffullpaper
\bigskip
\begin{proof}
  Let us take a graph $G$ with a leaf vertex $u$ which is connected to $G$ by an articulation $v$.
  Denote the vertices incident to $v$ on the Christmas cactus graph's cycle $x$ and $y$.

  \upperbound%
  We show an upper bound using the m-eternal domination with edge eviction.
  We split $G$ into two components: the leaf vertex $u$ and a graph $I= ( V(G) \setminus \{u\}, E(G) \cup \{\{u,w\}\} )$.
  We also see that $\EDE$ of a single vertex is 1.
  \begin{figure}[h]
    \centering
    \includegraphics{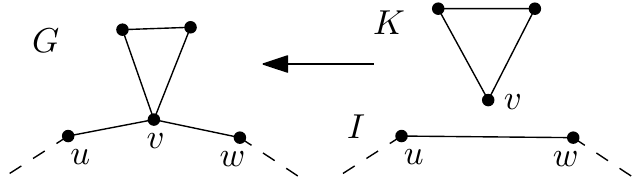}
    \caption{Decomposition of $G$ into $K$ and $I$.}%
    \label{fig:clique-upperbound}
  \end{figure}
  Use Lemma~\ref{lem:articulation} on decomposition of $G$ into $K$ and $I$ to get $\EDE(G) \le \EDE(I) + 1$.

  \lowerbound%
  Now we show that $\EGC(G) \ge \EGC(I)+1$.

  Assume that we have an optimal strategy for m-eternal guard configuration on $G$.
  Note that we can obtain $I$ by contracting $\{u, v\}$ and $\{x,v\}$ while adapting the strategy as shown in Lemma~\ref{lem:contraction}.
  We obtain a strategy for $I$ which uses exactly $\EGC(G)$ guards.
  Additionally, we see that if $u$ is to be dominated in $G$ there needs to be at least one guard in $\{u, v\}$.
  This means at least one guard is present at $x$ in all configurations of strategy on $I$.
  \begin{figure}[h]
    \centering
    \includegraphics{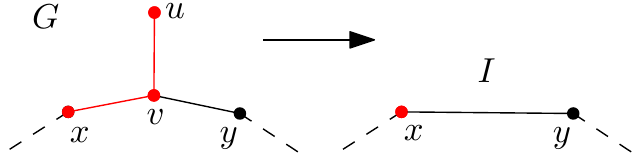}
    \caption{Contraction of $G$ edges and removal if a $K$ guard}%
    \label{fig:clique-lowerbound}
  \end{figure}

  The only case when the guard on $\{u, v\}$ needed to move from $\{u, v\}$ to $G\setminus \{u, v\}$ is when he was immediately substituted, i.e., $x \rightarrow v$ and $v \rightarrow y$ moves were performed.
  The contraction transforms this move to $x\rightarrow x$ and $x \rightarrow y$, however, in $I$ the guard from $x$ can move directly to $y$ not using the guard on $\{u, v\}$.
  Since we have a stationary guard at $x$ which is not critical for defending any vertices of $I$ we can remove him from all the configurations of the strategy for $I$. 
  Therefore $\EGC(I) \le \EGC(G)-1$.
\qed\end{proof}
\fi

\begin{restatable}{reduction}{reductionctk}%
  \label{rdc:c3k}
  Let $G$ be a Christmas cactus graph and $C$ be a leaf cycle on $n$ vertices where $n\in\{3k,3k+2\mid k\ge 1\}$.
  Let $v$ be the only articulation on this cycle.
  Remove $C \setminus v$ and create a new vertex $u$ and the edge $\{v,u\}$ in $G$.
\end{restatable}

\begin{lemma}\label{lem:c3k}
  Let $G$ be a graph satisfying the prerequisites of Reduction~\ref{rdc:c3k}.
  Let $I$ be $G$ after application of Reduction~\ref{rdc:c3k} with $C$.
  Then $I$ is a Christmas cactus graph and $\EDE(G) = \EGC(G) = \EGC(I) + k-1 = \EDE(I) + k-1$.
\end{lemma}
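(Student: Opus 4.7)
The plan is to apply Lemma~\ref{lem:technique} with the constant $k-1$, using the inductive hypothesis $\EDE(I)=\EGC(I)$, which is available because $I$ is a strictly smaller Christmas cactus graph. Before invoking the lemma, I would verify that $I$ is indeed a Christmas cactus: removing $V(C)\setminus\{v\}$ destroys one block incident to $v$, and attaching the pendant $u$ creates one new block in its place, so both the cactus property and the at-most-two-blocks-per-vertex condition are preserved. It then remains to prove (i)~$\EDE(G)\le\EDE(I)+(k-1)$ and (ii)~$\EGC(G)\ge\EGC(I)+(k-1)$.

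For the upper bound~(i), the idea is to combine an optimal $\EDE$-strategy for $I$ with a rotating cycle-patrol in the spirit of Lemma~\ref{lem:articulation}. I would split $G$ at the articulation $v$ into $C$ on one side and the rest on the other, the latter being exactly $I$ with the pendant $u$ removed. By Lemma~\ref{lem:cycle}, the cycle $C$ admits a patrol using $\lceil n/3\rceil$ guards that handles vertex attacks, vertex evictions, and cycle-edge evictions. One of these cycle-guards is designated to play the role of $I$'s pendant-guard, mirroring its position: at $v$ when $I$'s pendant-guard is at $v$, and parked at a distinguished cycle vertex when it is at $u$. Thus only $k-1$ guards need to be added on top of the $\EDE(I)$ guards from $I$'s strategy. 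Attacks and evictions on the $I$-side propagate to the cycle patrol by simulating an attack at $v$ (when the patrol is forced to keep a guard on $v$) or an eviction of the edge $\{u,v\}$ in $I$ (when the patrol vacates $v$), and symmetrically on the other side.

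For the lower bound~(ii), I would extract from any optimal $\EGC$-strategy on $G$ an $\EGC$-strategy on $I$ using $k-1$ fewer guards. The structural invariant is that every dominating configuration of $G$ places at least $k$ guards on $V(C)$: the vertices of $C\setminus\{v\}$ have their entire neighborhood inside $V(C)$, so dominating the path $C\setminus\{v\}$ requires at least $k$ guards lying on $V(C)$ by a ball-cover count. I would then fold the $G$-strategy into an $I$-strategy by keeping all guards outside $C$ untouched, preserving any guard at $v$, and collapsing the remaining cycle-guards into a single pendant-guard at $u$; consecutive configurations remain mutually traversable because every intra-cycle motion in $G$ becomes a stay-at-$u$ move, while any motion of a $G$-guard across $v$ is mirrored by the pendant-guard moving between $u$ and $v$. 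The main obstacle I expect is the bookkeeping for the $n=3k+2$ case, where the cycle in isolation requires $k+1$ guards: the saving of one guard must be squeezed out of the articulation by arguing that the tightest cycle configurations coincide with those in which $v$ is externally dominated by an $H$-guard, and one must simultaneously verify that the simulated motions respect the vertex- and edge-eviction requirements of $\EDE$ on both sides of the decomposition.
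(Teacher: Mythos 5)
Your treatment of the case $n=3k$ is essentially the paper's proof: the upper bound merges the cycle patrol of Lemma~\ref{lem:cycle} with the strategy for $I$ by pairing one patrol guard with $I$'s pendant guard (the paper phrases this as the invariant that one of the pairs $\{x,x'\}$, $\{x,y'\}$, $\{v,v'\}$ is always occupied), and the lower bound contracts the path $C\setminus\{v\}$ onto the pendant and deletes the guards forced to sit there. Two imprecisions are worth fixing. First, ``collapsing the remaining cycle-guards into a single pendant-guard'' does not yield configurations of a fixed order, since the number of guards on $C\setminus\{v\}$ may vary from configuration to configuration; what you need (and what the paper does) is to contract the path, observe that at least $k-1$ guards are \emph{always} present on the contracted vertex, and delete exactly those $k-1$, after which the pendant is still dominated either by a leftover guard there or by a guard at $v$. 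Second, a single ``designated'' mirroring guard cannot be a fixed member of a rotating patrol; the pairing must be with whichever patrol guard currently occupies one of the few positions in the neighborhood of $v$.

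The genuine gap is the case $n=3k+2$, which you flag as an obstacle but do not resolve, and the escape route you propose cannot work. No saving can be ``squeezed out of the articulation'': every vertex of the path $C\setminus\{v\}$, which has $3k+1$ vertices, has its entire closed neighborhood inside $V(C)$, so every dominating configuration of $G$ places at least $\ceil{(3k+1)/3}=k+1$ guards on $V(C)$ regardless of how $v$ is dominated from outside, and your own upper-bound construction adds $\ceil{n/3}-1=k$ guards, not $k-1$. A concrete witness: for $G$ a $C_5$ with one pendant vertex at $v$ we have $k=1$ and $I=P_2$, so a saving of $k-1=0$ would give $\EGC(G)=1$, contradicting $\EGC(G)\ge\DN(G)=2$. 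The point is that the ``$k-1$'' in the statement is a parameterization slip for this subcase, not something a cleverer strategy overcomes: the paper's proof actually establishes the reduction for $C_{3k}$ and $C_{3k-1}$, i.e., the saving is uniformly $\ceil{n/3}-1$ (which is what the algorithm charges on line~\ref{line:cycle-mod1}), and for a leaf cycle on $3k+2$ vertices that saving is $k$. Once re-indexed this way, your $3k$ argument carries over almost verbatim to $3k-1$: the residual path in the lower-bound count has $3k-5$ vertices and still forces $k-1$ guards, and the patrol on $C_{3k-1}$ may occupy both neighbors of $v$ simultaneously, which the pairing invariant tolerates.
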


\iffullpaper
\bigskip
\begin{proof}
  First, we will show the bounds for the strategies and use Lemma~\ref{lem:technique} to get the desired result for $C_{3k}$.
  After the proof we will briefly discuss that exactly the same proof technique is applicable to $C_{3k-1}$.

  Let $K$ be $K_2$ present in $I$ which is to be substituted with $C_{3k}$ for $k\ge 1$.
  Let the vertices of $K$ be denoted $v$ for the articulation and $x$ for the leaf.

  \upperbound%
  We will show that the m-eternal dominating set with edge eviction strategy for $I$ can be extended to $G$ adding $k-1$ guards.

  Let $x',v',y'$ be vertices in $C_{3k}$.
  Let us create a strategy for defending $G$ by merging the strategy defending $I$ with the strategy defending $C_{3k}$ from Lemma~\ref{lem:cycle}.
  We will keep an invariant that at least one of the following pairs is occupied: $\{x,x'\}$, $\{x,y'\}$, or $\{v,v'\}$, see Figure~\ref{fig:c3k-upperbound}.
  \begin{figure}[h]\centering
    \includegraphics{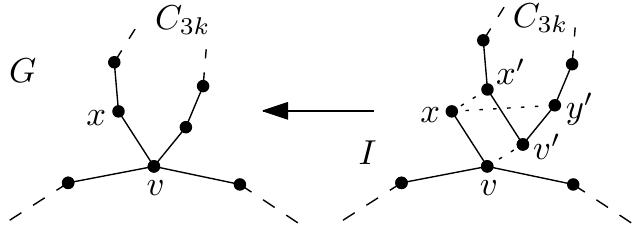}
    \caption[Reduction~\ref{rdc:c3k}]{%
      Merging strategies for $I$ and $C_{3k}$ into a united strategy for $G$.
    }%
    \label{fig:c3k-upperbound}
  \end{figure}

  Let us divide the attacks on $G$ into ones which target the vertices and edges of $C_{3k}$ and the rest which target vertices and edges present in $I$.

  If the cycle configuration changed while defending an attack, we will simulate the vertex attack on $v$ if $v'$ is occupied, or the eviction of $v$ if either $x'$ or $y'$ is occupied.
  On the other hand if configuration of $I$ changed while defending an attack on $I$ we simulate the vertex attack on $x'$ if $x$ is occupied or on $v'$ if $x$ is not occupied.
  Note that the invariant is always met so when the strategies are merged (note that $x$ is represented by either $x'$ or $y'$) we can remove the guard from strategy of $I$ which is paired up with a guard from $C_{3k}$.
  Since the $C_{3k}$ strategy of Lemma~\ref{lem:cycle} does not use two guards on incident vertices we cannot have a pair with $x$ and $v$ present at the same time. Therefore, the merged strategies do not have multiple guards at the same vertex.

  The total number of guards is $\EDE(G) \le \EDE(I)+\EDE(C_{3k})-1 \le \EDE(I)+k-1$.

  \lowerbound%
  We will show that the m-eternal guard configuration strategy on $G$ can be reduced to $I$ removing $k-1$ guards.

  Assume we have an optimal strategy for the m-eternal guard configuration on $G$.
  The leaf $C_{3k}$ must be dominated by at least $k$ guards on distinct positions.
  Let us label vertices next to $v$ on the $C_{3k}$ as $(x',x,v,v')$ (in order), see Figure~\ref{fig:c3k-lowerbound}.
  Let $P = C_{3k} \setminus \{v\}$ be a path and let $Q = C_{3k} \setminus \{x',x,v,v'\}$ be a path on $3k-4$ vertices.
  $Q$ has to be dominated by at least $k-1$ guards on $P\setminus\{x\}$ in each configuration of the strategy.
  Let us alter all the configurations by contracting all the edges of $P$ obtaining a vertex $x$.
  All the configurations in the strategy of $I$ contain $k-1$ guards on $x$
  because $P\setminus\{x\}$ was occupied by $k-1$ guards in order to dominate $Q$.
  These $k-1$ guards are always present in $P$ and translate to $k-1$ guards $x$ in $I$.
  Even if one of these guards leaves, another one must enter.
  This translates to swapping guards between $v$ and $x$ in $I$ which is an excess movement and we substitute it with $2$ stationary guards.
  Since these $k-1$ stationary guards are not necessary for dominating $I$ we remove them from $x$ resulting in the final strategy for $I$.
  \begin{figure}[h]\centering
    \includegraphics{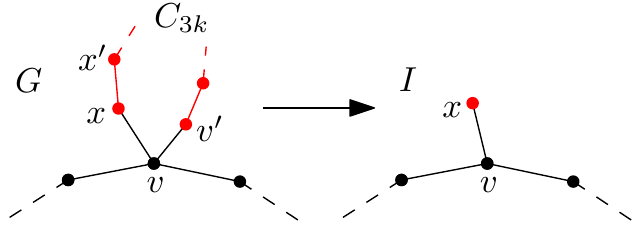}
    \caption[Reduction~\ref{rdc:c3k}]{%
      Situation in the graph $G$ in the case of Reduction~\ref{rdc:c3k}.
      On the left is the graph $G$ before edge contraction.
      On the right is the resulting graph $I$ after the edges have been contracted.
    }%
    \label{fig:c3k-lowerbound}
  \end{figure}

  Having a proof of the exact bound for $C_{3k}$ in hand, let us discuss that the same argument works for $C_{3k-1}$.
  The upper bound can change in a way that it is possible for the $C_{3k-1}$ strategy to occupy $x'$ and $y'$ at the same time, and still cannot occupy $v'$ at the same time as $x'$ or $y'$.
  This does not break the invariant because if $x'$ and $y'$ are occupied both the attacks are simulated on $x$ in $I$.
  The lower bound changes in a way that $Q$ is a path on $3k-5$ vertices which still has to be dominated by at least $k-1$ on $P\setminus\{x\}$.
\qed\end{proof}
\fi

\begin{restatable}{reduction}{reductionctkplusone}%
  \label{rdc:c3kplus1}
  Let $G$ be a Christmas cactus graph and $C$ be a leaf cycle on $3k+1$ vertices.
  Let $v$ be the only articulation on this cycle.
  Remove $C \setminus v$ from $G$.
\end{restatable}

\begin{lemma}\label{lem:c3kp1}
  Let $G$ be a graph satisfying the prerequisites of Reduction~\ref{rdc:c3kplus1}.
  Let $I$ be $G$ after application of Reduction~\ref{rdc:c3kplus1} substituting $C$ by $K_1$.
  Then $I$ is a Christmas cactus graph and $\EDE(G) = \EGC(G) = \EGC(I) + k = \EDE(I) + k$.
\end{lemma}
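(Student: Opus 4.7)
The plan is to apply Lemma~\ref{lem:technique} with constant $k$, mirroring the proof of Lemma~\ref{lem:c3k} but adapted to a leaf cycle of length $3k+1$ that is contracted all the way to the single vertex $v$. I will verify the three required ingredients: $\EDE(G) \le \EDE(I) + k$, $\EGC(G) \ge \EGC(I) + k$, and $\EDE(I) = \EGC(I)$; the last follows by the overall induction on the size of the Christmas cactus, since $I$ is strictly smaller than $G$. Membership of $I$ in the Christmas cactus class is immediate because the reduction only deletes an entire $2$-connected block.

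For the upper bound I merge an optimal $\EDE$-strategy $S_I$ on $I$ with a $k$-guard pattern on the path $P = C \setminus v$. Label the cycle as $v, u_1, u_2, \ldots, u_{3k}$. The $k$ cycle guards occupy one of three rotationally shifted configurations $\pi_A = \{u_2, u_5, \ldots, u_{3k-1}\}$, $\pi_B = \{u_1, u_4, \ldots, u_{3k-2}\}$, or $\pi_C = \{u_3, u_6, \ldots, u_{3k}\}$. The configuration $\pi_A$ dominates $P$ on its own, leaving the domination of $v$ to $S_I$; the configurations $\pi_B$ and $\pi_C$ each leave one endpoint of $P$ uncovered ($u_{3k}$ and $u_1$ respectively) and therefore require a guard at $v$ from $S_I$, which is enforced by simulating a vertex attack on $v$ in $S_I$ whenever the cycle enters $\pi_B$ or $\pi_C$. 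Cycle attacks are defended by shifting the cycle by $\pm 1$ to the appropriate configuration; attacks and evictions in $V(I)$ are deferred to $S_I$ after first returning the cycle to $\pi_A$ and simulating an eviction of $v$ in $S_I$; cycle edge evictions are handled analogously to Lemma~\ref{lem:cycle}, paired with an edge eviction simulation on the matching edge incident to $v$ in $S_I$.

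For the lower bound I use a contraction argument. In every configuration of any $\EGC$-strategy on $G$, the interior sub-path $u_2, \ldots, u_{3k-1}$ must be dominated by guards in its closed neighborhood $P$; since this sub-path has $3k-2$ vertices, at least $\lceil (3k-2)/3 \rceil = k$ guards lie on $P$ in every configuration. Contracting the cycle $C$ down to $v$ turns every one of those $P$-guards into a guard at $v$ in the induced strategy on $I$, so every configuration has $\ge k$ guards at $v$; between any two consecutive configurations we may match $k$ of these guards as stationary at $v$. Deleting these $k$ stationary guards yields a valid $\EGC$-strategy on $I$ with $\EGC(G) - k$ guards, and the domination of $v$ is preserved because either some original $G$-guard lay on $\{v, u_1, u_{3k}\}$ (leaving at least one residual guard at $v$ in $I$ after the deletion) or $v$ was dominated in $G$ by a non-cycle neighbor, which still dominates it in $I$.

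The main obstacle I anticipate is the symmetric transition $\pi_B \leftrightarrow \pi_C$ in the upper bound, which cannot be realized in one round of single-step shifts. I will resolve this by borrowing the $v$-guard from $S_I$ for one round --- for instance $v \to u_{3k}$ when responding from $\pi_B$ to an attack on a $\pi_C$-position vertex --- placing $k+1$ guards on the cycle for that round and simulating an eviction of $v$ in $S_I$ to account for the relocation; the following round restores the cycle to $\pi_A$ and re-acquires the $v$-guard via a simulated attack on $v$ in $S_I$. Since the $v$-guard is only relocated rather than created, the total guard count remains $\EDE(I) + k$ throughout.
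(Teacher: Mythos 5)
Your overall architecture matches the paper's: both ingredients of Lemma~\ref{lem:technique} are established by an upper bound of $\EDE(I)+k$ and a lower bound via contracting $C$ to a single vertex and stripping $k$ stationary guards. Your lower bound is essentially the paper's argument, and your counting of why at least $k$ guards must sit on $P$ (dominating the interior sub-path on $3k-2$ vertices forces $\lceil(3k-2)/3\rceil=k$ guards inside $P$) is in fact more explicit than the paper's one-line assertion; your closing remark that a guard on $\{v,u_1,u_{3k}\}$ leaves a residual guard at $v$ after deletion is true but deserves the observation that any such guard forces at least $k+1$ guards in $C$ (either $v$ itself is occupied on top of the $k$ guards needed for the interior, or the unique size-$k$ dominating set of the path $P$ avoids $u_1$ and $u_{3k}$, so a guard there means $P$ carries an extra guard or $v$ is occupied).

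Where you genuinely diverge is the upper bound. The paper gets it in one line: $v$ is an articulation separating $I$ from $P$, and $P$ plus the virtual edge $\{u_1,u_{3k}\}$ is $C_{3k}$, so Lemma~\ref{lem:articulation} together with Lemma~\ref{lem:cycle} gives $\EDE(G)\le\EDE(I)+\EDE(C_{3k})=\EDE(I)+k$. You instead build the merged strategy by hand with the three shifted configurations $\pi_A,\pi_B,\pi_C$, and this is where your write-up wobbles. First, when a vertex of $I$ is attacked you cannot in the same round both answer that attack in $S_I$ and simulate an eviction of $v$ in $S_I$ --- a strategy answers one request per round; the eviction simulation is simply unnecessary here, since returning the cycle to $\pi_A$ already makes the cycle self-sufficient regardless of whether $S_I$'s response occupies $v$. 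Second, your resolution of the $\pi_B\leftrightarrow\pi_C$ transition via a two-round borrow-and-restore is not sound as stated: the adversary, not you, chooses the next attack, so you cannot rely on a free round to return to $\pi_A$ and re-acquire the guard at $v$. The correct single-round fix is exactly the mechanism packaged in Lemma~\ref{lem:articulation}: perform $u_1\to v$ and $v\to u_{3k}$ simultaneously, i.e., rotate the $(k{+}1)$-guard multiset $\pi_B\cup\{v\}$ one step to $\pi_C\cup\{v\}$, so that $v$ never becomes unoccupied and no eviction or restoration round is needed. With that repair your construction is correct, but it amounts to re-deriving the special case of Lemma~\ref{lem:articulation} that the paper invokes directly; the lemma-based route also handles the cycle-edge evictions of $\{v,u_1\}$ and $\{v,u_{3k}\}$ uniformly, which your sketch only gestures at.
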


\iffullpaper
\bigskip
\begin{proof}
  We will show the bounds on the strategies and use the Lemma~\ref{lem:technique} to get the desired result.

  Let $G$ be the Christmas cactus graph and $C$ be its leaf cycle on $3k+1$ vertices.
  Let $v$ be the vertex of $C$ of degree bigger than $2$.
  Let $I$ be $G$ where $C \setminus \{v\}$ is removed.

  \upperbound%
  We extend an optimal m-eternal domination with edge eviction strategy on $I$ to $G$ using $k$ guards.

  Note that $v$ is an articulation and that $G \setminus \{v\}$ has two connected components.
  Let $u$ and $w$ be neighbors of $v$ on the cycle $C$.
  Let $P = C \setminus \{v\}$.
  Let us define $H = G \setminus P$ and $I' = P$.
  Observe that $I'$ as defined in Lemma~\ref{lem:articulation} is a cycle on $3k$ vertices.
  We apply Lemma~\ref{lem:articulation} on decomposition $H$ and $I$ of $G$ to get
  \[
    \EDE(G) \le \EDE(I) + \EDE(H) \le \EDE(I) + \EDE(C_{3k}) \le \EDE(I) + k.
  \]

  \lowerbound%
  We will show that $\EGC(I) \leq \EGC(G) - k$.
  Let $P = C \setminus \{v\}$ be a path on $3k$ vertices.
  Consider an optimal m-eternal guard strategy $S_G$.
  Let us alter this strategy by contracting all edges of $C$ resulting in a vertex $v'$ of $I$,
  altering the strategy due to Lemma~\ref{lem:contraction}.
  In all configurations of the new strategy the vertex $v'$ is occupied by at least $k$ guards because $P$ must have contained at least $k$ guards to be dominated.
  If we assume that the guards move only if necessary to allow change between configurations then we have $k$ stationary guards on $v'$ who never move away.
  However they do not defend any attacks because they only defended $P$ so we can remove them from every configuration of the strategy for $I$ obtaining a strategy which uses at most $\EGC(G)-k$ guards.
\qed\end{proof}
\fi

\begin{restatable}{reduction}{reductionbull}%
  \label{rdc:bull}
  Let $G$ be a Christmas cactus graph and $C$ be a cycle on three vertices $\{v,x,y\}$,
  let $x',y'$ be leafs in $G$, such that $x'$ connects to $x$, $y'$ to $y$,
  and $v$ is connected to the rest of the graph (no other edges are incident to $C$).
  We call the $C\cup\{x',y'\}$ subgraph a \emph{bull graph}.
  Remove $\{x, y, x', y'\}$ from $G$.
\end{restatable}

\begin{lemma}\label{lem:bull}
  Let $G$ be a Christmas cactus graph.
  Let $K$ be a bull graph connected to the rest of $G$ via a vertex of degree $2$.
  Let $I$ be $G$ after application of Reduction~\ref{rdc:bull} on $K$.
  Then $I$ is a Christmas cactus graph and $\EDE(G) = \EGC(G) = \EGC(I) + 2 = \EDE(I) + 2$.
\end{lemma}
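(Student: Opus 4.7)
The plan is to verify the three hypotheses of Lemma~\ref{lem:technique} with $k=2$: the inductive equality $\EDE(I)=\EGC(I)$ is available as part of the outer induction on the number of reductions, so the whole argument reduces to showing $\EDE(G)\le\EDE(I)+2$ (upper bound) and $\EGC(G)\ge\EGC(I)+2$ (lower bound). The lemma's conclusion $\EDE(G)=\EGC(G)=\EDE(I)+2=\EGC(I)+2$ then follows. Throughout I write $w$ for the unique vertex of the rest of $G$ adjacent to $v$; in $I$ the vertex $v$ is a leaf attached to $w$.

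For the upper bound, take an optimal $\EDE$-strategy $\sigma_I$ on $I$ with $\EDE(I)$ guards and add two dedicated bull guards. These two guards patrol the $P_4$-path $x'\!-\!x\!-\!y\!-\!y'$ (whose m-eternal domination number is $2$), maintaining in every configuration one guard in $\{x,x'\}$ and one in $\{y,y'\}$. Vertex attacks and vertex/edge evictions inside $I$ (including attacks on $v$ and evictions of $v$) are handled by $\sigma_I$; attacks on $\{x,x',y,y'\}$ are handled by the two bull guards independently. The only new obligation is evicting the three triangle edges of the bull: $\{x,y\}$ is evicted by placing the bull guards at $\{x',y'\}$ (the vertex $v$ is then still dominated because $\sigma_I$ keeps a guard at $v$ or $w$); $\{v,x\}$ is evicted by simulating an eviction of $v$ in $\sigma_I$ and placing the bull guards at $\{x',y\}$; and $\{v,y\}$ is handled symmetrically. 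In each resulting configuration, $\{x,x',y,y'\}$ is dominated by the two bull guards and $v$ is dominated either by $\sigma_I$ or by a bull guard on $x$ or $y$, so the combined strategy is valid.

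For the lower bound, let $\sigma_G$ be an optimal $\EGC$-strategy on $G$ with $n=\EGC(G)$ guards. Since $x'$ and $y'$ are leaves, every configuration must contain at least one guard in $\{x,x'\}$ and one in $\{y,y'\}$; as these sets are disjoint, every configuration carries at least two guards in $\{x,x',y,y'\}$. Contract the four edges $\{v,x\},\{v,y\},\{x,x'\},\{y,y'\}$ in turn; by four applications of Lemma~\ref{lem:contraction} we obtain a strategy $\tilde\sigma$ on $I$ using $n$ guards, in which $v$ carries at least two guards in every configuration. Because configurations are multisets, the bijection realising each transition between two configurations with at least two guards at $v$ can always be re-chosen so that two specific guards map $v\to v$ (any pair of crossing moves $v\to w$, $w\to v$ can be swapped into $v\to v$, $w\to w$ without changing the target multiset). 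Designate two such guards as stationary at $v$ and delete them from every configuration, yielding a strategy for $I$ with $n-2$ guards.

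The main obstacle is checking that this deletion really preserves defence of every attack in $I$: for $u\ne v$ this is immediate because no guard at $u$ is removed, but for $u=v$ one must argue that the defending configuration still carries a guard at $v$ after the deletion. This holds because the two stationary guards at $v$ in $\tilde\sigma$ play no role on $I$—in the original $\sigma_G$ they never left the bull except possibly through $v$, so in the contracted picture they are redundant copies whose removal does not interfere with the defence of $v$ or $w$; defence of $v$ in the reduced strategy uses exactly the same transition as in $\tilde\sigma$, which still leaves the remaining guards at their original positions, dominating $v$ by $w$ whenever $\tilde\sigma$ would have dominated it by $w$. Combining both inequalities with the inductive equality via Lemma~\ref{lem:technique} yields the claim.
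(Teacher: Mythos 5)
Your proposal is correct and takes essentially the same route as the paper's proof: the upper bound adds two dedicated guards maintaining one in $\{x,x'\}$ and one in $\{y,y'\}$ and resolves triangle-edge evictions by simulating an eviction of $v$ in the strategy for $I$, while the lower bound contracts the bull via Lemma~\ref{lem:contraction}, observes that the leaves $x'$ and $y'$ force at least two guards in the disjoint sets $\{x,x'\}$ and $\{y,y'\}$, and deletes the two resulting stationary guards. Your write-up is in fact somewhat more explicit than the paper's (case-by-case edge evictions, the argument for pinning two guards at the contracted vertex), and the conclusion via Lemma~\ref{lem:technique} is identical.
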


\iffullpaper
\bigskip
\begin{proof}
  Let $G$ be a Christmas cactus graph and $K$ be a leaf bull graph connected to the rest of $G$ by its vertex $v$.
  Let $K' = K \setminus \{v\}$ and $I = G \setminus K'$.

  \upperbound%
  Assuming an optimal m-eternal dominating strategy on $I$ we can extend it to $G$ by adding two guards who only guard one edge each to the bull graph.
  One guard will be on $x$ or $x'$ and one on either $y$ or $y'$.
  Attacks on $I$ will be resolved by the strategy for $I$ and attacks on vertices of $K'$ will be resolved by the new guards.

  Suppose that $v$ has a neighbor $u$ not in $K'$.
  Then the edge evictions of the new cycle $\{x,v,y\}$ are resolved by simulating a vertex eviction on $\{v\}$ in $I$ and moving the new guards into $x'$ and $y'$.

  If there is no neighbor of $v$ outside of $K'$, then $G$ is exactly the bull graph and is solved as a trivial case.

  Therefore $\EDE(G)\le\EDE(I)+2$.
  \begin{figure}[h]\centering
    \includegraphics{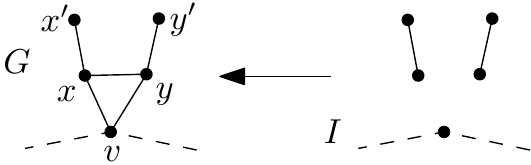}
    \caption[Reduction~\ref{rdc:bull}]{%
      Adding a guard on $\{x,x'\}$ and $\{y,y'\}$ suffices to defend the bull subgraph.
    }%
    \label{fig:bull-upperbound}
  \end{figure}

  \lowerbound%
  Let us contract all edges of the $K$ subgraph of $G$ to obtain $I$ and alter the configurations in accordance to Lemma~\ref{lem:contraction}.
  We note that the vertex $v'$ inherited all the guards of the original $K'$ which must contain at least $2$ guards in all configurations.
  Two guards from $K'$ will be stationary at $v'$ and are not necessary for defending $I$.
  This shows that $\EGC(I)\le \EGC(G)-2$.
  \begin{figure}[h]\centering
    \includegraphics{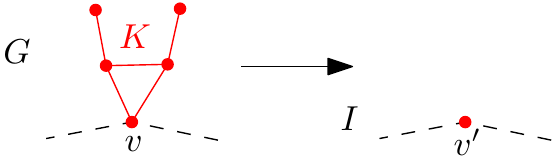}
    \caption[Reduction~\ref{rdc:bull}]{%
      The bull subgraph is contracted to $v'$ and contains $2$ guards who are no longer necessary.
    }%
    \label{fig:bull-lowerbound}
  \end{figure}
\qed\end{proof}
\fi

Let the \emph{$3$-pan graph} be a $K_3$ with one leaf attached.
\begin{restatable}{reduction}{reductionpan}%
  \label{rdc:pan}
  Let $G$ be a Christmas cactus graph and $C$ be a cycle on three vertices $\{v,x,y\}$,
  let $x'$ be a leaf in $G$, such that $x'$ connects to $x$ and $v$ is connected
  to the rest of the graph (no other edges are incident to $C$).
  The $C\cup\{x'\}$ is a $3$-pan graph.
  Remove $\{x,x'\}$ from $G$.
\end{restatable}

\begin{lemma}\label{lem:pan}
  Let $G$ be a Christmas cactus graph.
  Let $K$ be a $3$-pan graph connected to rest of the graph via a vertex of degree $2$.
  Let $I$ be $G$ after application of Reduction~\ref{rdc:pan} on $K$.
  Then $I$ is a Christmas cactus graph and $\EDE(G) = \EGC(G) = \EGC(I) + 1 = \EDE(I) + 1$.
\end{lemma}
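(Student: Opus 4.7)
The plan is to apply Lemma~\ref{lem:technique} with $k=1$, after verifying that $I$ is a Christmas cactus (vertex deletion preserves both the cactus property and the condition that every vertex lies in at most two biconnected components). The two remaining inequalities are $\EDE(G) \le \EDE(I) + 1$ and $\EGC(G) \ge \EGC(I) + 1$.

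For the upper bound, the plan is to start from an optimal $\EDE$-strategy $S_I$ for $I$ and extend it with one extra guard patrolling the pendant edge $\{x, x'\}$. Since $y$ is a leaf of $I$ attached to $v$, every configuration of $S_I$ carries a guard at $v$ or at $y$; the extra guard will sit at $x'$ when $v$ is occupied and at $x$ otherwise. Ordinary vertex attacks and vertex evictions in $G$ are handled by simulating the corresponding action in $S_I$ and moving the extra guard one step; the evictions of the triangle edges $vx$ and $xy$ reduce to simulating the eviction of $v$ or of $y$, respectively, in $S_I$ while keeping the extra guard at $x'$. The main obstacle is the eviction of the triangle edge $vy$, because $vy$ is a pendant edge of $I$ and therefore cannot be evicted by $S_I$ at all. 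The plan for this case is to exploit the triangle: simulate an eviction of the vertex $v$ in $I$, whose only possible response sends the $v$-guard to $y$ along the unique $I$-edge incident to $y$, and in $G$ redirect that move along the other triangle edge $vx$, placing the guard at $x$ instead. Together with the extra guard at $x'$ this produces a $G$-configuration in which $v$ and $y$ are both unguarded and dominated by $x$, while the dominance of the remaining $I$-vertices is unchanged, since $y$ as a leaf of $I$ contributes no dominance outside of $\{v, y\}$.

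For the lower bound, the plan is to first prove the structural invariant $|C \cap \{v, x, y, x'\}| \ge 2$ for every configuration $C$ of any m-eternal guard strategy on $G$. The key step is to rule out configurations in which $x$ is the only 3-pan guard, the unique candidate for a singleton 3-pan dominating set: such a configuration cannot defend an attack on $y$, because the only neighbour of $y$ holding a guard is $x$, so the move $x \to y$ is forced and leaves $x'$ undominated, while no other guard is adjacent to $\{x, x'\}$. Next, Lemma~\ref{lem:contraction} applied to the edges $xx'$ and $vx$ produces a strategy on a graph isomorphic to $I$ whose configurations carry either two guards at the merged vertex $v$, or one guard at $v$ together with a guard at $y$. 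Deleting one $v$-guard from every configuration will yield an $\EGC$-strategy on $I$ of order $\EGC(G) - 1$: the leaf $y$ of $I$ remains dominated (by a surviving $v$-guard or by $y$ itself), all other $I$-vertices are unaffected, and the deletion can be made consistent across transitions by always removing a guard whose counterpart in the next configuration stays at $v$. The technical care needed is to verify that such a consistent choice exists, which we expect to follow from a routine matching argument of the flavour used in the lower bound of Lemma~\ref{lem:leafclique}.
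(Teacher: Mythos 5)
Your proposal is correct and follows essentially the same route as the paper: the upper bound adds one guard patrolling $\{x,x'\}$ and handles the eviction of $\{v,y\}$ by simulating an eviction of $v$ in $I$ and redirecting the guard forced onto $y$ to $x$ instead, exactly as in the paper's proof, while the lower bound contracts the pan onto a single vertex via Lemma~\ref{lem:contraction}, deletes a guaranteed surplus stationary guard, and concludes with Lemma~\ref{lem:technique}. The only (immaterial) difference is that you contract $x$ and $x'$ into $v$ and therefore need the slightly stronger invariant that every configuration holds two guards on the pan, whereas the paper contracts $x'$, $x$, $y$ into the leaf $y'$ and only needs the observation that $\{x,x'\}$ is always occupied.
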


\iffullpaper
\bigskip
\begin{proof}
  Let $G$ be a Christmas cactus graph and $K$ be a leaf $3$-pan connected to the rest of the graph by its $2$ degree vertex $v$.
  Let $K' = K \setminus \{v\}$ and $I = G \setminus K'$.

  \upperbound%
  Let us have an optimal m-eternal dominating strategy with edge eviction for $I$.
  We will extend this strategy to $G$ by adding a guard to defend the edge $\{x,x'\}$.
  These strategies cover all vertex attacks, however, since by adding vertices $x$ and $x'$ we created a cycle $\{y,v,x\}$, we have to show how to perform edge evictions on the cycle edges.
  If any edge eviction occurs, move the new guard to $x'$.
  If the edge eviction targets $\{x,y\}$ or $\{x,v\}$ then simulate eviction on the vertex $y$ or $x$ in $I$, respectively.
  If $\{v,y\}$ should be evicted, then simulate eviction on the vertex $v$ in $I$,
  which would force a guard to stand at $y$ in the next configuration. We move him to $x$ instead.
  This can be done because $x$ is a neighbor to $y$ and its only neighbor where the guard could have come from.
  Even though the guard is at $x$ in the next move we can think of him as if he was at $y$ because he has the same set of possible moves.
  \begin{figure}[h]\centering
    \includegraphics{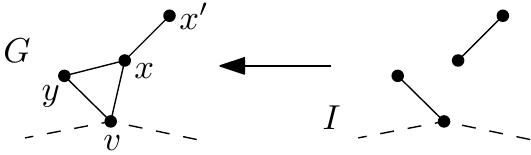}
    \caption[Reduction~\ref{rdc:pan}]{%
      Adding a guard at ${x,x'}$ and changing behavior of the guard who originally occupied either $y$ or $v$ yields a strategy for $G$.
    }%
    \label{fig:pan-upperbound}
  \end{figure}

  \lowerbound%
  Let us contract the edges of $K'$ in $G$ resulting in a single vertex $y'$ in $I$, see Figure~\ref{fig:pan-lowerbound}.
  Since $x'$ can be dominated only by a guard at $x$ or $x'$,
  then $y'$ contains a stationary guard which is not crucial for its defense.
  We remove this guard from all configurations of the strategy to obtain strategy which uses at most one less guard than the optimal guard configuration of $G$. Therefore $\EGC(I)\le\EGC(G)-1$.
  \begin{figure}[h]\centering
    \includegraphics{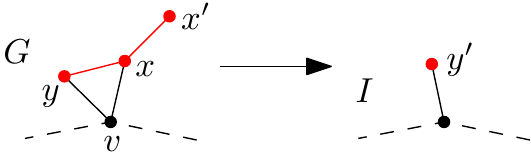}
    \caption[Reduction~\ref{rdc:pan}]{%
      Since one guard is always present at either $x$ or $x'$ in $G$, he can be removed to get strategy for $I$.
    }%
    \label{fig:pan-lowerbound}
  \end{figure}
\qed\end{proof}
\fi

Using the reductions we are ready to prove the part of Theorem~\ref{thm:cactus-algorithm} stating
that $\EGC(G) = \EDN(G) = \EDE(G)$ for all Christmas cactus graphs.

A \emph{block} or a \emph{$2$-connected component} of graph $G$ is a maximal $2$-connected subgraph of $G$.

\begin{lemma}\label{lem:leafblock}
  In a non-elementary christmas cactus graph with no leaf cycles, no leaf vertices connected to a vertex of degree $2$, and no leaf vertices connected to a block of size bigger than $3$, there is at least one leaf bull or one leaf $3$-pan graph.
\end{lemma}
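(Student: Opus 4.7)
The plan is to analyse the block-cut tree $T$ of $G$ and, by descending to a leaf of $T$ at maximum depth, expose a triangle whose attached $K_2$-leaves suffice to match either a bull or a $3$-pan template.

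The preliminary observations are forced by the hypotheses. The Christmas-cactus condition makes every articulation of $G$ sit in at most two blocks, so articulation nodes of $T$ have degree at most two. The ``no leaf cycle'' hypothesis makes every leaf node of $T$ a $K_2$-block whose non-articulation endpoint is a leaf vertex of $G$. Moreover, if $u\in V(G)$ is such a leaf with neighbour $v$, then $\deg_G(v)\ge 3$ (by the second hypothesis) and $v$'s second block has size at most $3$ (by the third hypothesis); as $v$ already uses one edge in the $K_2$, its second block must be a triangle.

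I would then root $T$ at an arbitrary non-leaf node and pick a leaf $K_2$-block $B_1=\{u,v\}$ at maximum depth, so that its grandparent in $T$ is a triangle $B_2=\{v,x,y\}$. The key step is to show, using maximality of $\operatorname{depth}(B_1)$, that whenever one of $x,y$ is a child articulation of $B_2$ in $T$, its unique child block must itself be a $K_2$-leaf: any longer chain of blocks below that articulation would, by iterating the structural observations above at its bottom-most leaf, terminate in a $K_2$-leaf strictly deeper than $B_1$, contradicting maximality.

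The proof then concludes with a short case split. The sub-case that neither $x$ nor $y$ is an articulation is excluded because $B_2$ would then be a leaf cycle, violating the hypothesis. In every remaining configuration the triangle $B_2$, together with its attached $K_2$-leaves and whatever plays the role of ``the rest of the graph'' (the parent of $B_2$ in $T$ when $B_2$ is not the root, or one of the $K_2$-leaves at a triangle vertex otherwise), yields a subgraph matching either the $3$-pan template (exactly one of the non-connector vertices of $B_2$ carries a $K_2$-leaf, the other having degree $2$) or the bull template (both do). The main subtlety I foresee is the ``net'' configuration where $B_2$ is the root of $T$ and all three of its vertices bear a $K_2$-leaf; I expect to absorb it into the same case split by choosing one of the three $K_2$-leaves to serve as the external attachment at the connector vertex of the bull.
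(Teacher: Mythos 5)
Your proof is correct, but it takes a genuinely different route from the paper's. The paper deletes the edges of all triangles, which splits the graph into components of blocks, and then performs a triangle-to-component-to-triangle traversal, arguing that the walk must terminate and that the terminal triangle is a leaf bull or $3$-pan; the structural verifications (why each non-connector vertex of that triangle carries either nothing or a single pendant $K_2$) are left largely implicit. You instead work directly on the block-cut tree, root it, and take a leaf block of maximum depth: the hypotheses force every leaf of the tree to be a pendant $K_2$ hanging off a triangle, and depth-maximality forces the other articulations of that triangle to carry only $K_2$-leaves, which immediately yields the bull/$3$-pan case split. Your extremal-depth argument is the more rigorous and self-contained of the two --- in particular you correctly identify and dispose of the boundary configurations (the leaf-cycle case, the ``net'' where all three triangle vertices carry pendants, and the rooted-triangle case), which the paper's traversal sketch does not address explicitly. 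The paper's decomposition, on the other hand, is slightly more economical to state and matches the figure it references. One small point worth making explicit in your write-up: when $B_2$ is the root and only \emph{two} of its vertices carry pendants (or only one), the graph is exactly a bull, a $3$-pan, or has a leaf cycle, all excluded by the non-elementarity and no-leaf-cycle hypotheses, so the only surviving rooted configuration is indeed the net.
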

\begin{proof}
  Let us call the blocks of size $3$ triangles.
  Removing edges of all the triangle subgraphs would split the christmas cactus into connected components of blocks.
  Let us choose a triangle and traverse the graph in the following way.
  If the current triangle is a bull or a $3$-pan we end the traversal and have a positive result.
  Otherwise, choose the component we have not visited yet and find a different triangle graph incident to it.
  Such triangle must exist otherwise it would be a leaf component.
  Mark this component as visited and repeat the process.
  See Figure~\ref{fig:triangle-decomposition}.
\qed\end{proof}

\begin{minipage}{0.40\textwidth}
  \includegraphics[width=1.0\textwidth]{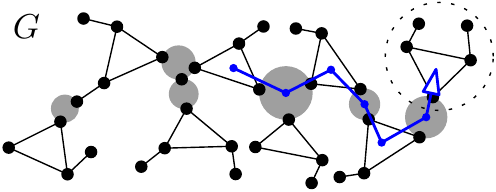}
  \captionof{figure}{%
    Process performed during the proof of an existence of a leaf bull or $3$-pan graph.
    Gray discs represent blocks.
  }%
  \label{fig:triangle-decomposition}
\end{minipage}
~
\begin{minipage}{0.50\textwidth}
  \includegraphics[width=1.0\textwidth]{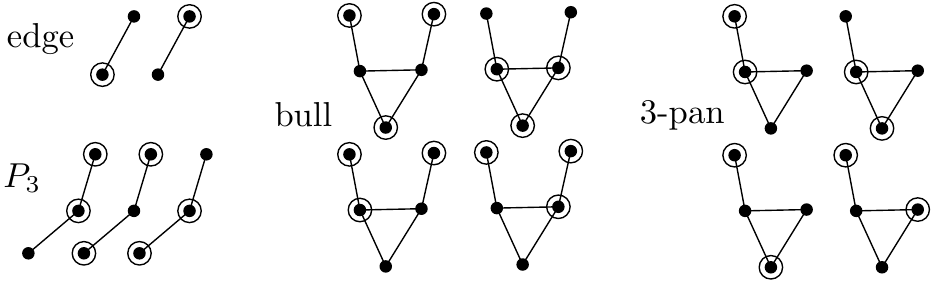}
  \captionof{figure}{%
    Mutually traversable configurations of optimal strategies for all elementary graphs except of cycles.
  }%
  \label{fig:elementary}
\end{minipage}

\begin{theorem}
  \label{thm:cactus-equality}
  Let $G$ be a Christmas cactus graph.
  Then $\EGC(G) = \EDN(G) = \EDE(G)$.
\end{theorem}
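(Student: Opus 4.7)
The plan is to prove $\EDE(G) \le \EGC(G)$ by strong induction on $|V(G)|$, since the reverse inequality $\EGC(G) \le \EDN(G) \le \EDE(G)$ is already given by Observation~\ref{obs:edn-lb}. Once both inequalities are established, the sandwich collapses to the desired equality $\EGC(G) = \EDN(G) = \EDE(G)$.

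For the base case I would handle the \emph{elementary} Christmas cactus graphs, i.e.\ those to which none of Reductions \ref{rdc:leafedge}--\ref{rdc:pan} apply. Cycles $C_k$ are dealt with directly by Lemma~\ref{lem:cycle}, which already establishes $\EGC(C_k) = \EDE(C_k)$. The remaining small elementary graphs (such as $K_1$, $K_2$, and the other configurations shown in Figure~\ref{fig:elementary}) can be treated individually by exhibiting explicit mutually traversable optimal configurations defending against all vertex attacks as well as vertex and cycle-edge evictions; this is a routine finite check.

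For the inductive step, given a non-elementary Christmas cactus $G$ I would argue that at least one reduction applies. I would walk through the cases in order: (i) if $G$ has a leaf cycle, then depending on the cycle length modulo $3$ apply Reduction~\ref{rdc:c3k} (Lemma~\ref{lem:c3k}) or Reduction~\ref{rdc:c3kplus1} (Lemma~\ref{lem:c3kp1}); (ii) otherwise, if $G$ has a leaf vertex adjacent to a degree-$2$ vertex, apply Reduction~\ref{rdc:leafedge} (Lemma~\ref{lem:leafedge}); (iii) otherwise, if some leaf is attached to a vertex sitting in a cycle of length $\ge 4$, apply Reduction~\ref{rdc:leafclique} (Lemma~\ref{lem:leafclique}); (iv) in the remaining case, Lemma~\ref{lem:leafblock} produces either a leaf bull or a leaf $3$-pan, and we apply Reduction~\ref{rdc:bull} (Lemma~\ref{lem:bull}) or Reduction~\ref{rdc:pan} (Lemma~\ref{lem:pan}) respectively. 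In every case the reduction yields a smaller Christmas cactus $I$ with $\EDE(G) \le \EDE(I)+k$ and $\EGC(G) \ge \EGC(I)+k$ for an explicit integer $k$, so by the induction hypothesis $\EDE(I) = \EGC(I)$ and thus Lemma~\ref{lem:technique} gives $\EDE(G) = \EGC(G)$.

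The main obstacle I expect is bookkeeping rather than any deep new idea: one must be careful that the case distinction in the inductive step is truly exhaustive, that each reduction preserves the Christmas cactus property (already asserted in the individual reduction lemmas), and that the base-case inventory of elementary graphs is complete. In particular, verifying that Lemma~\ref{lem:leafblock}'s hypotheses capture exactly the residue left after ruling out Reductions \ref{rdc:leafedge}--\ref{rdc:c3kplus1} is the most delicate part; once that is nailed down, the induction machinery together with Lemma~\ref{lem:technique} does the rest mechanically.
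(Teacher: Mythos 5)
Your proposal is correct and follows essentially the same route as the paper: iterate the reductions (phrased as strong induction on $|V(G)|$ rather than as a reduction loop, which is the same argument), use Lemma~\ref{lem:technique} with the paired inequalities from Lemmas~\ref{lem:leafedge}--\ref{lem:pan} to transfer the equality, and close the base case with Lemma~\ref{lem:cycle} plus a finite check of the remaining elementary graphs. Your case analysis, including the appeal to Lemma~\ref{lem:leafblock} for the bull and $3$-pan cases, matches the paper's decision tree.
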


\begin{proof}
  A Christmas cactus graph $G$ always contains either a leaf or a leaf cycle.
  This will be shown by contradiction.
  If all vertices have degree at least two and each cycle has at least two neighboring blocks then the chain of blocks would either never end or it must close itself, creating another big cycle, contradicting that the graph is a cactus.

  We will use reductions until we obtain an elementary graph for which the optimal strategy is known.
  The graph is called \emph{elementary} if it is a cycle, single edge, a path on three vertices, a bull, or a $3$-pan.
  The proper reduction is chosen repeatedly in the following manner, which is also depicted in Figure~\ref{fig:allcases}.
  \begin{itemize}
    \item If $G$ is elementary we return the optimal strategy.
    \item If there is a leaf cycle on $k\ge 3$ vertices:
      \begin{itemize}
        \item If $k\not\equiv 1 \mod 3$ use Reduction~\ref{rdc:c3k},
        \item otherwise $k\equiv 1 \mod 3$ and then use Reduction~\ref{rdc:c3kplus1}.
      \end{itemize}
    \item Otherwise there is a leaf vertex $u$ in $G$ and its neighbor $v$ is an articulation.
    \item If the vertex $v$ is connected to the rest of the graph by only one edge then use Reduction~\ref{rdc:leafedge},
    \item Vertex $v$ is connected to two vertices $x$ and $y$ which are different from $u$.
    \item If there is no edge between $x$ and $y$ then they must be connected by a path in $G$, otherwise, $v$ would be in more than two blocks.
      Use Reduction~\ref{rdc:leafclique}.
    \item If there is an edge between $x$ and $y$ then it cannot be on any other cycle than $\{v,x,y\}$.
      Note that vertices $v,x,y$ form a triangle which is be connected to at most $3$ other blocks.
    \item Now, the previous evaluation can be done on every leaf vertex $u$.
      If no of the previous cases is applicable, it means by Lemma~\ref{lem:leafblock} that there is a leaf bull (use Reduction~\ref{rdc:bull}) or a leaf $3$-pan graph (use Reduction~\ref{rdc:pan}).
  \end{itemize}
  \begin{figure}[h]
    \def\widpic{12em}
    \def\picx{26em}
    \def\dstp{*-6em}
    \def\wstp{*12em}
    \centering
    \scalebox{0.9}{
    \begin{tikzpicture}[node distance = 2cm, auto]
      \node [block] (cycle)        at (0\wstp,0\dstp) {$G$ has a leaf $C_n$?};
      \node [block] (parity)       at (1\wstp,0\dstp) {$n \bmod 3$?};
      \node [pic]   (c3kp1)          at (\picx ,0\dstp) {\includegraphics[width=\widpic-1em]{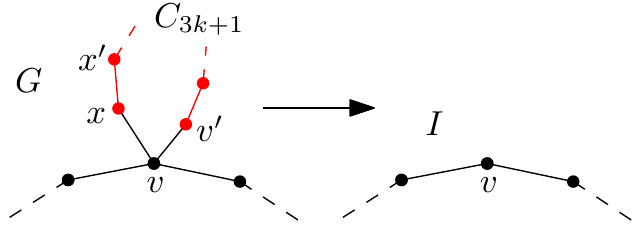}\\Reduction~\ref{rdc:c3kplus1}};
      \node [pic]   (c3k)        at (\picx ,1\dstp) {\includegraphics[width=\widpic-1em]{images/c3k-lowerbound.pdf}\\Reduction~\ref{rdc:c3k}};
      \node [block] (leaf)         at (0\wstp,1\dstp) {There is a leaf $u$ connected to $v$.};
      \node [block] (degree)       at (0\wstp,2\dstp) {Vertex $v$ has degree $2$?};
      \node [pic]   (red1)         at (\picx ,2\dstp) {\includegraphics[width=\widpic]{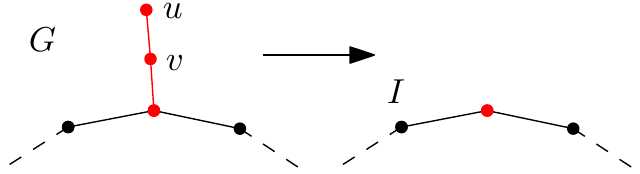}\\Reduction~\ref{rdc:leafedge}};
      \node [block] (isconnected)  at (0\wstp,3\dstp) {Are neighbors $x$ and $y$ of $v$ connected?};
      \node [pic]   (notconnected) at (\picx ,3\dstp) {\includegraphics[width=\widpic]{images/clique-lowerbound.pdf}\\Reduction~\ref{rdc:leafclique}};
      \node [block] (bull)         at (0\wstp,4\dstp) {Is there a leaf bull in the graph?};
      \node [pic]   (redbull)      at (\picx ,4\dstp) {\includegraphics[width=\widpic]{images/bull-lowerbound.pdf}\\Reduction~\ref{rdc:bull}};
      \node [block] (pan)          at (0\wstp,5\dstp) {There is a leaf $3$-pan graph.};
      \node [pic]   (redpan)       at (\picx ,5\dstp) {\includegraphics[width=\widpic]{images/pan-lowerbound.pdf}\\Reduction~\ref{rdc:pan}};

      \path [line] (cycle) --node{no} (leaf);
      \path [line] (leaf) -- (degree);
      \path [line] (degree) --node{no} (isconnected);
      \path [line] (isconnected) --node{yes} (bull);
      \path [line] (bull) --node{no} (pan);

      \path [line] (cycle) --node{yes} (parity);
      \path [line] (parity) --node{1} (c3kp1);
      \path [line] (parity) edge[bend right] node[below left]{0 or 2} (c3k);
      \path [line] (degree) -- node{yes} (red1);
      \path [line] (isconnected) --node{no} (notconnected);
      \path [line] (bull) --node{yes} (redbull);
      \path [line] (pan) --node{Always true by Lemma~\ref{lem:leafblock}.} (redpan);
    \end{tikzpicture}
  }
    \caption{Decision tree for choosing a proper reduction}%
    \label{fig:allcases}
  \end{figure}

  Using the reductions we eventually end up in a situation where the Christmas cactus graph is an elementary graph.
  The optimal strategy for cycle was shown in Lemma~\ref{lem:cycle}, all the configurations of optimal strategies of all the remaining graphs are depicted in Figure~\ref{fig:elementary}.

  In each of these elementary graphs, allowing eviction attacks does not increase the necessary number of guards.
  Also, allowing more guards at one vertex does not add any advantage and does not decrease the necessary number of guards.
  Therefore, for all of these cases it holds that $\EDE = \EDN = \EGC$.
\qed\end{proof}

\section{Upper bound on the m-eternal domination number of cactus graphs}
\label{sec:bound}
\begin{definition}
  Let us have a cactus graph $G$.
  Let us color vertices of $G$ in the following way.
  Let a vertex be colored \emph{red} if it is contained in more than two $2$-connected components of $G$, otherwise it is colored \emph{black}.
  Let $R(G)$ denote the number of red vertices, and $Rg(G)$ denote the number of red connected components (e.g. $R(G)=7$ and $Rg(G)=3$ in Figure~\ref{fig:red-decomposition}).

  Let $G'$ be a graph created from $G$ by contracting each red connected components into a red vertex.

  Let $B_{G'}$ be a set of maximal connected components of black vertices in $G'$.
  Let $R_{G'}=\{b \cup N(b) \mid b\in B_{G'}\}$.
  Note that $N(b)$ contains only red vertices.
  Let the \emph{Christmas cactus decomposition} $Op(G)$ be a disjoint union of graphs induced by $G[r]$ for all $r\in R_{G'}$.
  See Figure~\ref{fig:red-decomposition}.
  \begin{figure}[h]\centering
    \includegraphics[width=0.95\textwidth]{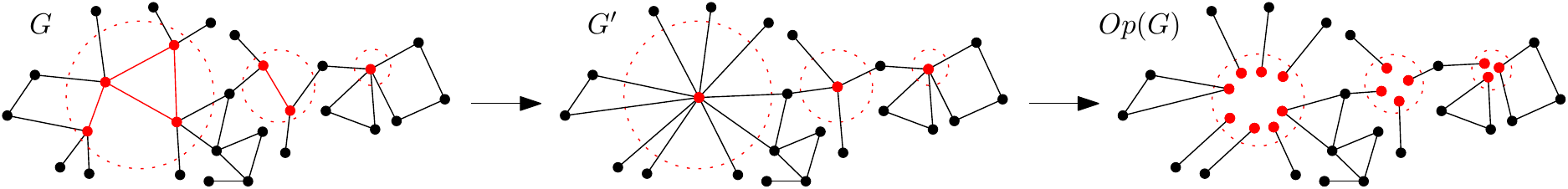}
    \caption{%
      Process of transforming a cactus $G$ by contracting red edges to produce $G'$ and subsequently duplicating red vertices for each black connected component to get $Op(G)$.
    }%
    \label{fig:red-decomposition}
  \end{figure}
\end{definition}

\begin{theorem}\label{thm:cactus-upperbound}
  The m-eternal domination number of a cactus graph $G$ is bounded by
  \[
    \EDN(G) \le \smashoperator{\sum_{H \in Op(G)}}\big(\EDN(H)-R(H)\big) + R(G) + Rg(G),
  \]
  where $Op(G)$ are the components of the Christmas cactus decomposition,
  $R(G)$ is the number of red vertices in $G$ and $Rg(G)$ is the number of connected components of red vertices.
\end{theorem}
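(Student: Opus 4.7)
The plan is to construct an explicit m-eternal dominating strategy for $G$ by gluing together optimal strategies on the Christmas cactus subgraphs $H \in Op(G)$, exploiting Theorem~\ref{thm:cactus-algorithm} which gives $\EDN(H) = \EDE(H)$ for each $H$, so that each local strategy comes equipped with responses to vertex and cycle-edge evictions. As a preliminary step, I would verify that every $H \in Op(G)$ is itself a Christmas cactus: by construction $H$ is built around a single maximal black component together with the red components adjacent to it, and the extra blocks of $G$ attached to the red vertices at \emph{other} black components are cut away by the decomposition, so every vertex of $H$ lies in at most two blocks of $H$.

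For the initial guard placement, on every red connected component $C$ of $G$ I place one guard at each of its red vertices plus one additional ``reserve'' guard that is free to roam inside $C$, spending $R(G) + Rg(G)$ guards on the red skeleton. For each $H \in Op(G)$ I then place an additional $\EDN(H) - R(H)$ guards on black vertices of $H$; combined with the $R(H)$ red guards already present, the guards inside $H$ form a full optimal configuration of $H$'s eviction-capable strategy. Summing over $H$ recovers exactly the bound in the statement.

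The invariants I would maintain during play are that every red vertex is occupied and that, for each $H \in Op(G)$, the restriction of the current global configuration to $V(H)$ is a legitimate configuration of $H$'s $\EDE(H)$-strategy. To defend an attack on a vertex $v$ contained in some $H$, I execute one step of $H$'s strategy; simultaneously, for every other $H' \in Op(G)$ sharing a red vertex (or an edge between two red vertices lying on a cycle) touched by this step, I simulate a vertex or cycle-edge eviction in $H'$ so that the projections onto the different components stay consistent. Whenever a red vertex ends up vacated this way, the reserve guard in its red component slides along the component in the same round to refill the vacancy, which is possible because a red component is a connected subgraph holding $|C|+1$ guards and can therefore be repositioned by a chained single-step movement.

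The main obstacle will be showing that this strategy survives an infinite sequence of attacks, since naively one reserve per red component looks fragile: a long sequence of attacks could seem to drain the same component of its spare. The critical point to argue is that the local eviction-capable strategies of the $H$'s are cyclic in the sense that any red vertex of $H$ can always be re-occupied after being evicted by a future step of $H$'s strategy; consequently a reserve consumed in one round is eventually regenerated as the surrounding local strategies progress, and the $+Rg(G)$ spares are exactly what is needed to bridge the temporary vacancies so that the invariant is preserved forever.
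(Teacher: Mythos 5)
Your high-level plan coincides with the paper's: decompose $G$ into the components of $Op(G)$, run an optimal local strategy on each (using $\EDN(H)=\EDE(H)$), keep the red skeleton occupied, and charge $R(G)+Rg(G)$ extra guards for it; the guard count also matches. The genuine gap is in the gluing mechanism at a shared red vertex, and it runs in the \emph{opposite} direction to the paper's. When the attacked component $H$ moves its guard off a red vertex $r$, you propose to simulate an \emph{eviction} of $r$ in every other component $H'$ containing $r$, ``so that the projections stay consistent.'' But an eviction response of $H'$ is a configuration of $H'$ with no guard at $r$, so $H'$ must also move its $r$-guard somewhere into $H'$. Since in your accounting the single physical guard at $r$ is shared by all $k$ components meeting there, such a round would require $k$ guards to emerge from $r$ into $k$ different components, and one reserve per red \emph{component} (not $k$ per red vertex) cannot supply them. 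The paper does the reverse: when one component vacates $r$ it simulates an \emph{attack} on $r$ in every other component containing it, forcing each of them to keep or restore a guard there. This maintains the invariant that at most one of the $k$ incident components is ever ``borrowing'' the guard at $r$, which is exactly why a surplus of one guard per red connected component suffices; the stated bound then falls out by discounting $k-2$ of the $k-1$ guaranteed guards at each contracted red vertex and re-expanding the red components. No eviction attacks are needed in this step at all.

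Two further steps are asserted rather than proved. First, your initial placement and invariant require that the optimal strategy of each $H$ possess, in every relevant round, a configuration occupying \emph{all} red vertices of $H$ simultaneously; nothing guarantees this, and the paper avoids needing it by tracking the multiset union of the components' configurations instead of insisting that the restriction of the global configuration equal each local one. Second, your closing ``regeneration'' argument is the crux and is not an argument the adversary must respect: the attacker chooses the sequence of attacks, so a reserve consumed in one round is re-created only if the strategy actively forces re-occupation of $r$ in the other components in that same round --- which is precisely what the simulated attacks on $r$ accomplish and what the simulated evictions undo. So the proposal has the right skeleton and the right arithmetic, but the synchronization rule that makes the merge sound is missing and, as written, would fail.
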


\begin{proof}
  Let $G'$ be a graph where all connected components of red vertices are contracted creating one red group vertex for each component as shown in Figure~\ref{fig:red-decomposition}.
  Let $Op(G)$ be the Christmas cactus decomposition of $G$.

  First, find an optimal strategy for each Christmas cactus graph in $Op(G)$ separately by the process presented in Section~\ref{sec:cacti-graphs}.
  We will show how to merge these disjoint strategies into one strategy for the whole graph $G'$ and subsequently generalize it for $G$.

  Assume that all red vertices of $G'$ are always occupied, hence we have to show how to defend black vertices.
  Let us reverse the process of Christmas cactus decomposition and merge disjoint Christmas cactus graphs by the red vertices to obtain $G'$.
  When a black vertex is attacked we simulate an attack on the respective Christmas cactus graph to get a configuration which defends the vertex as shown in Fig.~\ref{fig:attack-simulation}.
  If any of the red vertices of the Christmas cactus graph are not occupied then we simulate an attack on the red vertex in all other Christmas cactus graphs which contain it.

  The process ensures that in each configuration all but one Christmas cactus graph incident to each red vertex has a guard on it.
  A red vertex $v$ incident to $k$ black components in $G'$ is always occupied by exactly $k-1$ guards.
  So we can remove $k-2$ guards and it remains always occupied by exactly one guard.
  This strategy for $G'$ uses $\sum_{H\in Op(G)}\big(\EDN(H) - R(H)\big) + 2Rg(G)$ guards.

  \begin{figure}[h]\centering
    \includegraphics[width=0.7\textwidth]{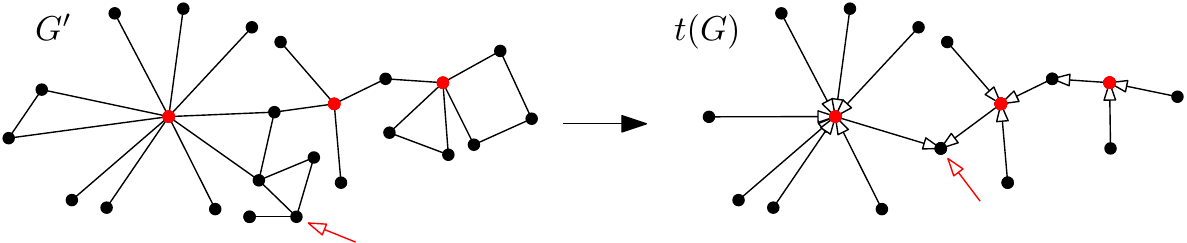}
    \caption{%
      Simulating attacks to get the right amount of guards on each red vertex.
    }%
    \label{fig:attack-simulation}
  \end{figure}

  Now we get $G$ from $G'$ by expanding the red vertices back into the original red connected components.
  Add guards such that all red vertices are occupied.
  The strategy will be altered slightly.
  When we defend $G'$ by moving a guard from red vertex $v$ then another guard from a different Christmas cactus component is forced to move to $v$ by a simulated attack.
  However in $G$ the left vertex $u$ and the attacked vertex $u'$ might not coincide so we pick a path from $u$ to $u'$ in the red component and move all the guards along the path.

  The change in number of guards can be imagined as removing guards on red vertices of $G'$ and adding guards on all red vertices of $G$.
  We devised a strategy for $G$ which uses $\sum_{H\in Op(G)}\big(\EDN(H) - R(H)\big) + R(G) + Rg(G)$ guards.
\qed\end{proof}

\section{Linear-time algorithm}
\label{sec:algorithm}

We present a description of a linear-time algorithm, which computes $\EDN$ in
Christmas cactus graphs in linear time.
The algorithm applies previously presented reductions on the block-cut tree of the input graph.

\begin{definition}[F. Harary~\cite{graph-theory}]
  Let the \emph{block-cut tree} of a graph $G$ be a graph $\BC(G) = (A \cup \mathcal{B}, E')$, where $A$ is the set of articulations in $G$ and $\mathcal{B}$ is the set of biconnected components in $G$.
  A vertex $a \in A$ is connected by an edge to a biconnected component $B \in \mathcal{B}$ if and only if $a \in B$ in $G$.
\end{definition}

The high level description of the algorithm is as follows.


\begin{enumerate}
  \item Construct the Christmas cactus decomposition of $G$ and iterate the following for each component $\spread[H,c]$.
    \begin{enumerate}
      \item Construct the block-cut tree $\BC$ of the Christmas cactus $H_i$.
      \item Repeatedly apply the reductions on the leaf components of $\BC$.
      \item If a reduction can by applied, appropriately modify $\BC$ in constant time, so that it represents $H_i$ with the chosen reduction applied.
        At the same time, increase the resulting $\EDN$ appropriately.
      \item If $\BC$ is empty return the resulting $\EDN$ and end the process.
    \end{enumerate}
  \item Use Theorem~\ref{thm:cactus-upperbound} to get the upper bound.
\end{enumerate}

This result is summed up in the following theorem.
We also present the detailed pseudocode of the linear-time algorithm for
finding the m-eternal domination number for Christmas cactus graphs.

  \newcounter{algolines} 
  \begin{algorithm}[h]
    \caption{$\EDN$ of a Christmas cactus graph, Part 1}
    \label{alg:opuntia}
    \begin{algorithmic}[1]
      \Procedure{m-EDN-Christmas-cactus}{$G$}
        \State $\BC = (V', E', size, deg) \gets \text {the block-cut tree of } G$\label{line:bc}

        \State $stack \gets \emptyset$ \Comment{$stack$ keeps track of all leaf blocks in $\BC$}
        \For{$v \in V'$}
          \If{$deg(v) \le 1$} \Comment{All leaf blocks, or the only block}
            \State add $v$ on top of the $stack$
          \EndIf
        \EndFor

        \State $g \gets 0$ \Comment{The resulting $\EDN(G)$}
        \While{$stack \neq \emptyset$}\label{line:main-loop}
          \State $v \gets$ retrieve and remove an element from top of the $stack$
          \If{$deg(v)=0$}\label{line:triv-case} \Comment{Block is an elementary cycle, an edge, or one vertex}
            \State $(g, stack, size) \gets \Call{remove-leaf-block}{v, stack}$
          \Else
            \State $a \gets$ the articulation incident to $v$ \Comment{$deg(a)=2$ in Ch. cactus graphs}
            \State $u \gets$ the second block neighbor of $a$ other than $v$
            \State $(g, stack, size) \gets \Call{block}{u, v, g, stack, size}$
          \EndIf
        \EndWhile
        \State \Return $g$
      \EndProcedure
      \setcounter{algolines}{\arabic{ALG@line}}
    \end{algorithmic}
  \end{algorithm}

  \setcounter{algorithm}{0}
  \begin{algorithm}[h!]
    \caption{$\EDN$ of a Christmas cactus graph, Part 2}
    \begin{algorithmic}[1]
      \setcounter{ALG@line}{\arabic{algolines}}
      \Procedure{block}{$u, v, g, stack, size$}\label{line:block}
        \If{$size(v) \ge 3$}\Comment{Leaf cycle of size at least $3$}
          \State $(g, stack, size) \gets \Call{leaf-cycle}{v, g, stack, size}$
        \ElsIf{$size(v) = 2$}
          \If{$size(u) = 2$}\Comment{Reduction~\ref{rdc:leafclique}}
            \State $g \gets g + 1$\label{line:leafedge}
            \State $(g, stack) \gets \Call{remove-leaf-block}{v, g, stack, size}$
            \State $(g, stack) \gets \Call{remove-leaf-block}{u, g, stack, size}$
          \ElsIf{$size(u) \ge 3$}\Comment{Partial Reductions~\ref{rdc:bull} and~\ref{rdc:pan}}
            \State $g \gets g + 1$ \label{line:contraction}
            \State $(g, stack) \gets \Call{remove-leaf-block}{v, g, stack, size}$
            \State $size(u) \gets size(u)-1$ 
          \EndIf
        \EndIf
        \State \Return $(g, stack, size)$
      \EndProcedure

      \Procedure{leaf-cycle}{$v, g, stack, size$}\label{line:leaf-cycle}
        \If{$size(v) \equiv 0 \mod 3$ or $size(v) \equiv 2 \mod 3$}\label{line:deleted-block}\Comment{Reduction~\ref{rdc:c3k}}
        \State $g \gets g + \ceil{size(v)/3} - 1$\label{line:cycle-mod1}
          \State $size(v) \gets 2$
          \State add $v$ on top of the $stack$ \Comment{The block still remains a leaf}
        \ElsIf{$size(v) \equiv 1 \mod 3$}\Comment{Reduction~\ref{rdc:c3kplus1}}
          \State $g \gets g + (size(v) - 1)/3$\label{line:cycle-mod0}
          \State $(g, stack) \gets \Call{remove-leaf-block}{v, g, stack, size}$
        \EndIf
        \State \Return $(g, stack, size)$
      \EndProcedure

      \Procedure{remove-leaf-block}{$v, g, stack, size$}
        \If{$deg(v) \ge 1$}
          \State $a \gets$ the articulation incident to $v$ \Comment{$deg(a)=2$ in Ch. cactus graphs}
          \State $u \gets$ the second block neighbor of $a$ other than $v$
          \State remove $a$ from the neighbor list in $u$ and erase $v$ and $a$ from $\BC$
          \If{$deg(u) \le 1$}
            \State add $u$ on top of the $stack$ \Comment{Vertex can become a leaf only here}
          \EndIf
        \ElsIf{$deg(v) = 0$}
          \State $g \gets g + \ceil{size(v)/3}$ \label{line:isolatedblock}
          \State erase $v$ from $\BC$
        \EndIf
        \State \Return $(g, stack)$
      \EndProcedure
    \end{algorithmic}
  \end{algorithm}

\begin{theorem}\label{thm:linear-alg}
  Let $G$ be a cactus on $n$ vertices and $m$ edges. Then there exists an algorithm which computes an upper bound on $\EDN(G)$ in time $\mathcal{O}(n+m)$.
  Moreover, this algorithm computes the $\EDN$ of Christmas cactus graphs exactly.
\end{theorem}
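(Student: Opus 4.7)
The plan is to split the proof into three parts: (a) for a Christmas cactus $G$, the routine \textsc{m-EDN-Christmas-cactus} returns $\EDN(G)$ exactly; (b) the routine runs in time $\mathcal{O}(n+m)$; and (c) for a general cactus, one obtains the claimed upper bound by running the exact procedure on each component of the Christmas cactus decomposition and combining the results via Theorem~\ref{thm:cactus-upperbound}.

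For part (a), I would proceed by induction on the number of blocks in $\BC(G)$. The base case is a single remaining block, which is either a cycle, a single edge, or a single vertex; the \textsc{remove-leaf-block} call in the $deg(v)=0$ branch handles all of these by adding $\lceil size(v)/3\rceil$ on line~\ref{line:isolatedblock}, matching Lemma~\ref{lem:cycle} and the trivial cases. For the inductive step, the main loop extracts a leaf block $v$ from the stack and executes exactly one branch of \textsc{block} or \textsc{leaf-cycle}; I would verify case by case that each branch faithfully applies one of the reductions of Section~\ref{sec:cacti-graphs} with the correct additive constant. Concretely, the \textsc{leaf-cycle} branches match Reductions~\ref{rdc:c3k} and~\ref{rdc:c3kplus1} via Lemmas~\ref{lem:c3k} and~\ref{lem:c3kp1}; the $(size(v),size(u))=(2,2)$ branch matches Reduction~\ref{rdc:leafedge} via Lemma~\ref{lem:leafedge}; the $(size(v),size(u))=(2,\ge 3)$ branch implements Reductions~\ref{rdc:leafclique}, \ref{rdc:bull}, and~\ref{rdc:pan} incrementally. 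Since Theorem~\ref{thm:cactus-equality} gives $\EGC(G)=\EDN(G)=\EDE(G)$, a correct sequence of reductions yields exactly $\EDN(G)$.

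The main technical obstacle is the incremental nature of the $size(u)\ge 3$ branch: Lemmas~\ref{lem:bull} and~\ref{lem:pan} operate on complete leaf bull or leaf $3$-pan subgraphs at once, whereas the algorithm absorbs their pendant leaves one at a time by decrementing $size(u)$. I would resolve this by defining, for each triangular block $u$, a potential equal to the number of leaves already absorbed into it and arguing that the aggregate of the unit $g$-increments on line~\ref{line:contraction} equals the constant offset prescribed by the applicable lemma once $u$ is itself pushed back on the stack. The exhaustive case decomposition in the proof of Theorem~\ref{thm:cactus-equality}, together with Lemma~\ref{lem:leafblock}, ensures that every such triangular block eventually becomes either a leaf clique (handled by Reduction~\ref{rdc:leafclique}), a leaf $3$-pan, or a leaf bull, so the sequence of incremental updates terminates in a configuration with the correct cumulative count.

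For part (b), $\BC(G)$ is constructible in $\mathcal{O}(n+m)$ by standard DFS (line~\ref{line:bc}), has $O(n)$ nodes, and can be stored with doubly-linked neighbor lists so that every operation in the pseudocode (stack push/pop, degree and size lookups, neighbor-list surgery) runs in $O(1)$ time. Each iteration of the main loop either removes a node from $\BC$ or decrements some $size(u)$ by one; since the latter can occur at most $size(u)-2$ times per block and $\sum_u size(u) = O(n)$, an amortized charging argument bounds the total number of iterations by $O(n)$. For part (c), the Christmas cactus decomposition $Op(G)$ together with $R(G)$ and $Rg(G)$ are all computable in $\mathcal{O}(n+m)$ by scanning block incidences in $\BC(G)$ to identify red vertices and then duplicating each such vertex once per incident black component; running the exact procedure on every $H\in Op(G)$ and summing according to Theorem~\ref{thm:cactus-upperbound} yields the claimed upper bound on $\EDN(G)$ while the total work remains linear because the components of $Op(G)$ together contain $O(n)$ vertices and edges.
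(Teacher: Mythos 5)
Your proposal is correct and follows essentially the same route as the paper: construct the Christmas cactus decomposition, run the block-cut-tree reduction algorithm on each component, verify branch-by-branch that each case of the pseudocode implements one of Reductions~\ref{rdc:leafedge}--\ref{rdc:pan} with the right additive constant, and combine via Theorem~\ref{thm:cactus-upperbound}. The one point you flag as the ``main technical obstacle'' --- that the $size(u)\ge 3$ branch absorbs pendant leaves of a triangle one at a time rather than reducing a whole bull or $3$-pan at once --- is exactly the subtlety the paper also addresses (with a brief commutativity remark rather than your potential-function bookkeeping), so no essential difference remains.
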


  \begin{proof}
    First step of the algorithm is to create the Christmas cactus decomposition of the input graph $G$.
    For each of these Christmas cactus graphs we run Algorithm~\ref{alg:opuntia} and then output the answer devised by Theorem~\ref{thm:cactus-upperbound}.

    The construction of a Christmas cactus decomposition of $G$ is done by constructing the block-cut tree of $G$ and coloring red all the vertices which are present in at least $3$ blocks, and coloring black all other vertices.
    We use the DFS algorithm to find all the connected components of red vertices and contract each of these components into a single red vertex.
    Next, we use the DFS algorithm to retrieve all connected components of black vertices along with their incident red vertices.
    Note that every edge of $G$ is contained at most once in the Christmas cactus decomposition, hence the total number of vertices and edges in the decomposition is bounded by $\mathcal{O}(|E(G)|)$.

    We run the Algorithm~\ref{alg:opuntia} for each component separately.
    Now we will show its correctness.
    The algorithm performs reductions in the while loop at line~\ref{line:main-loop}.
    In each iteration it processes one leaf block vertex in $\BC$.

    First, we argue that the algorithm correctly counts the number of guards on all elementary graphs.
    In case $\BC$ consists of a single block, it is detected on line \ref{line:triv-case}.
    The block is removed and the number of guards increases by $\ceil{size(v)/3}$ on line~\ref{line:isolatedblock}, which is consistent with the result for elementary cycle and edge.
    In the other case the $\BC$ consists of several blocks.
    If $\BC$ represents a path on three vertices, one guard will added by line~\ref{line:leafedge} and one by line~\ref{line:isolatedblock} and both blocks are removed.
    If $\BC$ represents a $3$-pan it will count $2$ guards by first reducing one block by line~\ref{line:contraction} or \ref{line:leafedge}, and then reducing a single block of size at most $2$ on line~\ref{line:isolatedblock}.
    If $\BC$ represents a bull the algorithm will find a leaf block of size $2$ reducing the bull to a path on $3$ vertices on line~\ref{line:contraction}, counting correctly $3$ guards.

    Now we show that if the algorithm performs an operation on a leaf block it resolves in the correct number of guards at the end.
    Let $v$ be a leaf block vertex processed in the loop.
    Note that each block of a Christmas cactus graph is either a cycle or an edge.

    Consider the case where $v$ is a leaf cycle.
    If the cycle has $size(v)\equiv 1\mod 3$, then the block is removed entirely adding $(size(v)-1)/3$ guards on line~\ref{line:cycle-mod1}, exactly as in Reduction~\ref{rdc:c3kplus1}.
    Otherwise the cycle is contracted to a block of size $2$ and $\ceil{size(v)/3}-1$ guards are added on line~\ref{line:cycle-mod0}, as in Reduction~\ref{rdc:c3k}.

    Consider the case where the leaf block $v$ has $size(v)=2$, representing a leaf vertex.
    Let $u$ be the block that shares an articulation with $v$.
    If block $u$ has $size(u)=2$ then we remove both of these blocks and add a guard on line~\ref{line:leafedge}, as in Reduction~\ref{rdc:leafedge}.
    If block $u$ has $size(u)\ge 3$ then $v$ is a leaf vertex connected to a cycle.
    On line~\ref{line:contraction} $v$ is removed and $size(u)$ is decreased by one.
    This is consistent with Reductions~\ref{rdc:leafclique}, \ref{rdc:bull}, and \ref{rdc:pan}.
    Note that reducing a leaf incident to a component on three vertices first, yields the same result as reducing the graph first and waiting for the leaf be reduced in either Reduction~\ref{rdc:leafclique}, \ref{rdc:bull}, or \ref{rdc:pan}.


    The algorithm performs reductions which were proved to be correct.
    This concludes the proof of correctness.

    Let $n'$ be the number of vertices of the Christmas cactus graph, and $m'$ be the number of its edges.
    Now we show that Algorithm~\ref{alg:opuntia} runs in time $\mathcal{O}(n' + m')$.
  Using Tarjan's algorithm \cite{tarjan-alg}, we can find the blocks of a graph in linear time.
  By the straight-forward augmentation of the algorithm we obtain the block-cut tree $\BC$ where every block $v$ contains additional information $size(v)$ with the number of vertices it contains.
  Note that the number of vertices and edges of $\BC$ is bounded by $2n'$.
  Therefore $|V(\BC)| = \mathcal{O}(n')$ and $|E(\BC)| = \mathcal{O}(n')$.

  Now consider the while loop at line \ref{line:main-loop}.
  We claim that every vertex in $\BC$ is processed at most twice in the loop and every iteration takes constant time.
  Let $v$ be the currently processed vertex.
  During one iteration of the main cycle either a block of size at most $2$ is deleted or a block of size at least $3$ is either shrunk to size $2$ or deleted.
  Therefore, the algorithm for Christmas cactus graphs runs in time $\mathcal{O}(n' + m')$.

  For the cactus graph we need to create the Christmas cactus decomposition which uses Tarjan's algorithm \cite{tarjan-alg} for creating the block-cut tree, and the DFS, which both runs in $\mathcal{O}(n+m)$.
  As stated, the running time of the algorithm is linear in the size of the Christmas cactus.
  Therefore, the total running time is bound by sum of their sizes $\mathcal{O}(n+m)$.
\qed\end{proof}

This result together with Theorem~\ref{thm:cactus-equality} implies Theorem~\ref{thm:cactus-algorithm}.



\section{Future work}
The computational complexity of the decision variant of the \MED\ problem is still mostly unknown as mentioned in the introduction.

The natural extension of the algorithm from cactus graphs is to the more general case of graphs with treewidth $2$.
It is also an interesting question if we can design an algorithm, whose running time is parameterized by the treewidth of the input graph.

\bigskip
\noindent
{\bf Acknowledgments.} We would like to thank Martin Balko and an anonymous referee for their valuable comments and insights.



\bibliographystyle{plain}
\bibliography{main}


\end{document}